\title{Multi-Buffer Simulations for Trace Language Inclusion}
\author{Milka Hutagalung$^1$ \qquad Norbert Hundeshagen$^1$ 
\qquad Dietrich Kuske$^2$ \\ Martin Lange$^1$ \qquad Etienne Lozes$^3$
\institute{$^1$ School of Electrical Engineering and Computer Science 
University of Kassel, Germany \\
$^2$ Technische Universit\"at Ilmenau, Germany \\
$^3$ LSV, ENS Cachan, France}}
\theoremstyle{plain}
\newtheorem{theorem}{Theorem}
\newtheorem{corollary}[theorem]{Corollary}
\newtheorem{lemma}[theorem]{Lemma}
\theoremstyle{definition}
\newtheorem{example}[theorem]{Example}
\theoremstyle{remark}
\newtheorem{remark}[theorem]{Remark}
\newcommand{\spoiler}{\textsc{Spoiler}\xspace}
\newcommand{\duplicator}{\textsc{Duplicator}\xspace}
\newcommand{\Nat}{\mathbb{N}}
\def\newarrow#1{\mathop{{\hbox{\setbox0=\hbox{$\scriptstyle{#1\quad}$}{$%
\mathrel{\mathop{\setbox1=\hbox to
\wd0{\rightarrowfill}\ht1=3pt\dp1=-2pt\box1}\limits^{#1}}%
$}}}}}
\renewcommand{\epsilon}{\varepsilon}
\newcommand{\Transition}[3]{\ensuremath{#1 \newarrow{#2} #3}}
\newcommand{\DKsimugame}[4]{\ensuremath{\mathcal{G}^{#1}_{#2}(#3,#4)}}
\newcommand{\delaygame}[3]{\ensuremath{\mathcal{G}^{#1}_{\mathsf{del}}(#2,#3)}}
\newcommand{\DKsimu}[2]{\ensuremath{\mathbin{\sqsubseteq^{#1}_{#2}}}}
\newcommand{\DKnotsimu}[2]{\ensuremath{\mathbin{\nsqsubseteq^{#1}_{#2}}}}
\begin{document}
\maketitle

\begin{abstract}
  We consider simulation games played between Spoiler and Duplicator
  on two B\"uchi automata in which the choices made by Spoiler can be
  buffered by Duplicator in several buffers before she executes them
  on her structure. We show that the simulation games are useful to approximate the
  inclusion of trace closures of languages accepted by finite-state
  automata, which is known to be undecidable. We study the
  decidability and complexity and show that the game with bounded
  buffers can be decided in polynomial time, whereas the game with one
  unbounded and one bounded buffer is highly undecidable. We also show
  some sufficient conditions on the automata for Duplicator to win the
  game (with unbounded buffers).
\end{abstract}

\vspace*{-7mm}

\section{Introduction}

Simulation is a pre-order between labeled transition systems
$\mathcal{T}$ and $\mathcal{T}'$ that formalizes the idea that
``$\mathcal{T}'$ can do everything that $\mathcal{T}$ can''. Formally,
it relates the states of the two transition systems such that each
state $t'$ in $\mathcal{T}'$ that is connected to some $t$ in
$\mathcal{T}$ can mimic the immediate behaviour of $t$, i.e.\ it
carries the same label, and whenever $t$ has a successor then $t'$ has
a matching one, too.

Simulation relations have become popular in the area of automata
theory because they can be used to efficiently under-approximate
language inclusion problems for automata on finite or infinite words
and trees and to minimise such automata \cite{lncs575*255,
  EtessamiWS01, journals/tcs/FritzW05,
  conf/tacas/AbdullaBHKV08}.
One advantage of these simulation relationships is that
they are often computable in polynomial time whereas language
inclusion problems are PSPACE-complete for typical (finite, B\"uchi,
parity, etc.) automata on words and EXPTIME-complete for such automata
on trees. To reason about simulation relations, one very often
characterises them by the existence of winning strategies of the
second player in certain two-player games. These are
played on the state spaces of two automata where one player (\spoiler)
reveals a run of the first automaton piece-wise and the second player
(\duplicator) has to produce a corresponding run of the second
automaton (where ``corresponding'' often means ``on the same word or
tree''). The simplest such game requires \spoiler to produce one step
of his run per round and \duplicator to answer immediately by one step
of her run. With this game, it is very easy to construct pairs of
automata such that language inclusion holds but simulation does not
(i.e., \duplicator has no winning strategy). Intuitively, \duplicator
is too weak to capture language inclusion. This observation has led to
the study of several extensions of simulation relations and games with
the aim of making \duplicator stronger or \spoiler weaker whilst
retaining a better complexity than language inclusion. Examples in
this context are \emph{multi-pebble simulation}
\cite{conf/concur/Etessami02}, \emph{multi-letter simulation}
\cite{HutagalungLL:LATA13,MayrC13}, \emph{buffered simulations}
\cite{DBLP:journals/corr/HutagalungLL14}, and \emph{delayed games}
\cite{DBLP:journals/corr/abs-1209-0800}.  In all these contexts, the
winning condition is a regular set of infinite words over the set of
pairs of letters (this is explicit in
\cite{DBLP:journals/corr/abs-1209-0800} and implicit in
\cite{conf/concur/Etessami02,HutagalungLL:LATA13,MayrC13,
  DBLP:journals/corr/HutagalungLL14} where \duplicator aims to produce
the same word).

In this paper, we aim at approximating the inclusion of the
Mazurkiewicz trace closure of two regular languages using simulation technology. 
More precisely, we are given two B\"uchi automata $\mathcal A$ and $\mathcal B$ and a trace alphabet and we ask whether,
for every infinite word accepted by $\mathcal A$, there is a trace-equivalent word accepted by $\mathcal B$. This problem was shown to be
undecidable by Sakarovitch~\cite{Sak92} (and
\cite{journals/ijfcs/Finkel12} can be used to prove that it is even
highly undecidable). To approximate this problem, we use a game
approach as indicated above, i.e., \spoiler and \duplicator reveal
runs of $\mathcal A$ and $\mathcal B$ piece-wise producing, in the
limit, a pair of runs. In doing so, \duplicator tries to produce a run
on a trace-equivalent word. Since the set of pairs of trace-equivalent
words is not regular, \duplicator's winning condition is not a regular
set. Hence the results from
\cite{DBLP:journals/corr/abs-1209-0800,conf/concur/Etessami02,
  HutagalungLL:LATA13,MayrC13, DBLP:journals/corr/HutagalungLL14} are
not applicable.

To overcome this problem, we first restrict \duplicator's moves in
such a way that she is forced to produce a prefix of some trace-equivalent 
word. This is done using several buffers, i.e., extending
the idea of buffered simulation from
\cite{DBLP:journals/corr/HutagalungLL14}: instead of using only one
buffer, there are several buffers of certain capacities and associated
(not necessarily disjoint) alphabets. Whenever \spoiler chooses a
letter, it is written to all those buffers whose alphabet contains
that letter. Dually, \duplicator can only use those letters that are
available at all the associated buffers. \duplicator can only win if
she does not leave any letter in any of the buffers for ever. With
this setup of game and winning condition, \duplicator effectively
attempts to produce a trace-equivalent word. The second part of the
winning condition is standard: if \spoiler produces an accepting run,
then \duplicator's run has to be accepting as well.

Our main results in this context are the following:
\begin{itemize}
\item If \duplicator has a winning strategy, then the language of the
  first automaton is contained in the trace closure of the language of
  the second automaton (Thm.~\ref{thm:approximate}). While the latter
  property is undecidable, the existence of a winning strategy with
  buffers of finite capacities is decidable in polynomial time
  (provided the number and capacities of buffers are unchanged,
  Thm.~\ref{thm:decidability}).
\item From \cite{DBLP:journals/corr/HutagalungLL14}, we know that
  buffered simulation (using a single unbounded buffer) is
  decidable. Section~\ref{sec:undecidability} proves that adding a
  single bounded buffer yields a highly undecidable simulation
  relation, as hard as recursive B\"uchi games and therefore hard for
  the class of all Boolean combinations of $\Sigma^1_1$-problems
  (Thm.~\ref{thm:RBG2simu}).
\item Section~\ref{sec:sufficiency} describes the simulation relations
  in terms of continuous functions between the accepting runs of the
  two automata. This yields completeness results in the sense that
  multi-buffer simulation implies trace-closure inclusion in certain
  cases.
\end{itemize}

\section{B\"uchi Automata and Trace Languages}
\label{sec:prel}

Let $\Sigma$ be an alphabet. Then $\Sigma^*$ denotes the set of finite
words over $\Sigma$, $\Sigma^\omega$ is the set of all infinite words
over $\Sigma$, and $\Sigma^\infty=\Sigma^*\cup\Sigma^\omega$. For a
natural number $k$, we set $[k]=\{1,2,\dots,k\}$.

A \emph{nondeterministic B\"uchi automaton} or \emph{NBA} is a tuple
$\mathcal A=(Q,\Sigma,q_{\mathsf{I}},\delta,F)$ where $Q$ is a finite
set of \emph{states}, $\Sigma$ is an alphabet, $q_{\mathsf{I}}\in Q$
is the \emph{initial state},
$\delta\colon Q\times\Sigma\to\mathcal{P}(Q)$ is the \emph{transition
  function}, and $F\subseteq Q$ is the set of \emph{accepting states}.

Let $w=a_0a_1a_2\dots\in\Sigma^\omega$ be an infinite word over
$\Sigma$. A \emph{run} of $\mathcal A$ on $w$ is an alternating
sequence of states and letters $\rho=(q_0,a_0,q_1,a_1,\dots)$ with
$q_0=q_{\mathsf{I}}$ and $q_{i+1}\in\delta(q_i,a_i)$ for all
$i\ge0$. This run is \emph{accepting} if $q_i\in F$ for infinitely
many $i\in\Nat$. The \emph{language $L(\mathcal A)$ of $\mathcal A$}
is the set of infinite words that admit an accepting run.

The main motivation of this paper is to approximate inclusion of trace
languages. Therefore we shortly introduce the notions of finite and
infinite traces, for a detailed treatment see \cite{DieR95}.

A \emph{trace alphabet} is a tuple $\sigma=(\Sigma_i)_{i\in[k]}$ of
not necessarily disjoint alphabets (note that $k$ is arbitrary). Let
$\Sigma=\bigcup_{i\in[k]}\Sigma_i$ and, for $a\in\Sigma$, let
$\sigma(a)=\{i\in[k]\mid a\in\Sigma_i\}$ which is by construction
nonempty. The idea is that the letter $a\in\Sigma$ denotes an action
that is performed by the set of processes $\sigma(a)$. For $i\in[k]$,
$\pi_i\colon\Sigma^\infty\to\Sigma_i^\infty$ is the natural projection
function that deletes from each word all letters that do not belong
to $\Sigma_i$.  We call two words $u,v\in\Sigma^\infty$
\emph{$\sigma$-equivalent} if $\pi_i(u)=\pi_i(v)$ for all
$i\in[k]$. In this case, we write $u\sim_\sigma v$. The relation
$\sim_\sigma$ is called \emph{trace equivalence}.

The restriction of $\sim_\sigma$ to $\Sigma^*$ has an alternative
characterisation (that is the traditional definition of trace
equivalence): let
$D=\bigcup_{i\in[k]}\Sigma_i\times\Sigma_i\subseteq\Sigma^2$ denote
the set of pairs $(a,b)$ with
$\sigma(a)\cap\sigma(b)\neq\emptyset$. This reflexive and symmetric
relation is called the \emph{dependence relation} associated
with~$\sigma$. Then the restriction of $\sim_\sigma$ to $\Sigma^*$ is
the least congruence on the free monoid $\Sigma^*$ with
$ab\sim_\sigma ba$ for all $(a,b)\notin D$.\footnote{Given a reflexive
  and symmetric relation $D\subseteq\Sigma^2$, one can always find a
  tuple $(\Sigma_i)_{i\in[k]}$ that induces $D$ (where $k$ depends on
  $D$.)}
The quotient $\mathbb{M}(\sigma)=\Sigma^*/_{\sim_\sigma}$ is
called \emph{trace monoid}, its elements are \emph{finite traces}.
The quotient $\mathbb{R}(\sigma)=\Sigma^\omega/_{\sim_\sigma}$ is
the set of \emph{real} or \emph{infinite traces}.
The \emph{trace closure} of a language $L\subseteq\Sigma^\infty$ w.r.t.\
$\sigma$ is the language
$[L]_\sigma=\{v\in\Sigma^\infty\mid \exists u\in L\colon
u\sim_\sigma v\}$. The language $L$ is \emph{trace closed} if it
equals its trace closure.

\begin{example}
  Let $\Sigma=\{a,b,c\}$ with $\sigma(a)=\{1\}$, $\sigma(b)=\{1,2\}$,
  and $\sigma(c)=\{2\}$. Then $a^*(bc)^*$ is trace closed, the trace
  closure of $a^*c^*$ is the language $\{a,c\}^*$ and the trace
  closure of $(ac)^*$ is the language of all words $u\in\{a,c\}^*$
  with the same numbers of occurrences of $a$ and $c$, resp.
\end{example}

Let the mapping $\sigma$ be such that the induced independence
relation $\Sigma^2\setminus D$ is not transitive. Then, given a
regular language $L\subseteq\Sigma^*$, it is undecidable whether its
trace closure $[L]_\sigma$ is regular~\cite{Sak92}. Even more, it is
undecidable whether the closure is universal, i.e., equals
$\Sigma^*$. Consequently, for two regular languages $K$ and $L$, it is
undecidable whether $K\subseteq[L]_\sigma$ (which is equivalent to
$[K]_\sigma\subseteq[L]_\sigma$). These negative results also hold for
languages of infinite words and their trace closures.

\section{Multi-Buffer Simulations}
\label{sec:multibuf}

Let $\sigma=(\Sigma_i)_{i\in[k]}$ be a trace alphabet and
$\mathcal{A} =
(Q^{\mathcal{A}},\Sigma,p_I,\delta^{\mathcal{A}},F^{\mathcal{A}})$ and
$\mathcal{B} =
(Q^{\mathcal{B}},\Sigma,q_I,\delta^{\mathcal{B}},F^{\mathcal{B}})$ be
two automata over the alphabet $\Sigma$. We aim at finding an
approximation to the undecidable question of
$L(\mathcal A)\subseteq[L(\mathcal B)]_\sigma$ via simulation
relations. In these, we have $k$ FIFO buffers and
$\sigma(a)\subseteq[k]$ is interpreted as the set of buffers that are
used to transmit the letter $a$.
Let $\kappa\colon [k] \to \mathbb{N} \cup \{ \omega \}$ be a function
that assigns a \emph{capacity} to each buffer, i.e.\ the maximum
number of letters that this buffer can contain at any time. We will
often write such a function as a tuple $(\kappa(1),\ldots,\kappa(k))$.

The multi-buffer game
$\DKsimugame{\kappa}{\sigma}{\mathcal{A}}{\mathcal{B}}$ is played on
these two automata and the $k$ buffers between players \spoiler and
\duplicator as follows. Configurations are tuples
$(p,\beta_1,\beta_2,\dots,\beta_k,q)\in Q^{\mathcal{A}} \times
\Sigma_1^* \times \ldots \times \Sigma_k^* \times Q^{\mathcal{B}}$
with $|\beta_i|\le\kappa(i)$ for all $i\in[k]$. The first and last
component can be seen as the places of two tokens on the state spaces
of $\mathcal{A}$ and $\mathcal{B}$ respectively; the others denote the
current buffer contents. The initial configuration is
$(p_I,\epsilon,\ldots,\epsilon,q_I)$.  A round consists of a move by
\spoiler followed by a move by \duplicator. \spoiler choses
$a\in\Sigma$, moves the token in $\mathcal{A}$ forward along an
$a$-transition of his choice, and pushes a copy of the $a$-symbol
to each of the buffers from $\sigma(a)$. Then \duplicator either skips
her turn or chooses a non-empty word $a_1\dots a_n\in \Sigma^+$ and
moves the token in $\mathcal{B}$ along some $a_1\dots a_n$-labeled
path. While doing so, for every $i$, she pops an $a_i$ from each of
the buffers from $\sigma(a_i)$. More formally, in a configuration of
the form $(p,\beta_1,\ldots,\beta_k,q)$,
\begin{enumerate}
\item \spoiler picks a letter $a \in \Sigma$ and a state
  $p' \in Q^\mathcal{A}$ such that $\Transition{p}{a}{p'}$, and
  outputs $ap'$.
\item \duplicator picks a finite run $qv_1q_1v_2q_2 \cdots v_{n}q'$
  from $q$ in the automaton $\mathcal {B}$ such that
  $\pi_i(a) \beta_i= {\beta_i}'\pi_i(v_1v_2\dots v_n)$ for all
  $b \in [k]$.  She outputs $v_1q_1v_2q_2 \cdots v_{n}q'$. 
\end{enumerate}
The play proceeds in the configuration
$(p',{\beta_1}',\ldots,{\beta_k}',q')$.

Since $(p',{\beta_1}',\ldots,{\beta_k}',q')$ is a configuration, we
implicitely have $|{\beta_i}'|\le\kappa(i)$, i.e., the size of the
buffers is checked \emph{after} the round. So \spoiler can write into
a ``full'' buffer (i.e., with $|\beta_i|=\kappa(i)$) and it is
\duplicator's responsibility to shorten the buffer again. In
particular, \duplicator has to read all letters from buffers with
capacity 0 in the very same round. Furthermore, if \spoiler uses
buffers of finite capacity infinitely often, then \duplicator cannot
skip forever.

A finite play is \emph{lost} by the player that got stuck (which, for
\spoiler, means that he gets trapped in a sink of $\mathcal{A}$ while,
for \duplicator, it means that she should shorten a buffer but cannot
do so).  An infinite play produces an infinite run
$\rho_{\mathcal{A}}$ of $\mathcal{A}$ over some infinite word
$w_{\mathcal{A}}\in\Sigma^\omega$, and a finite or infinite run
$\rho_{\mathcal{B}}$ of $\mathcal{B}$ over some word
$w_{\mathcal{B}}\in\Sigma^\infty$. This play is \emph{won} by
\duplicator iff
\begin{itemize}
\item $\rho_{\mathcal{A}}$ is not an accepting run, or
\item $\rho_{\mathcal{B}}$ is an infinite accepting run and
  every letter written by \spoiler into a buffer will eventually
  be read by \duplicator (formally: for every letter $a\in\Sigma$, the
  numbers of occurrences of $a$ in $w_{\mathcal A}$ and in
  $w_{\mathcal B}$ are the same).
\end{itemize}
We write $\mathcal{A} \DKsimu{\kappa}{\sigma} \mathcal{B}$ if
\duplicator has a winning strategy for the game
$\DKsimugame{\kappa}{\sigma}{\mathcal{A}}{\mathcal{B}}$.

\begin{example}
  \label{ex:multibuff}
  Consider the trace alphabet $\sigma$ with $\Sigma_1=\{a,b\}$,
  $\Sigma_2=\{b\}$ and $\Sigma_3=\{c\}$ and the following two NBA
  $\mathcal{A}$ (top) and $\mathcal{B}$ (below) over the alphabet
  $\Sigma$.  \vspace*{-4mm}

\noindent
\begin{minipage}{0.575\textwidth} \setlength{\parfillskip}{0pt}
  \quad\enspace We have
  $\mathcal{A} \DKsimu{(\omega,2,0)}{\sigma} \mathcal{B}$.  Note that
  in this game, $a$ and $b$ get put into an unbounded buffer, $b$ also
  gets put into a buffer of capacity $2$, and $c$ gets put into a
  buffer of capacity $0$, i.e.\ \duplicator has to respond immediately
  to any $c$-
\end{minipage}
\raisebox{2mm}{\begin{minipage}{0.41\textwidth}
\begin{tikzpicture}[initial text={}, node distance=12mm, every state/.style={minimum size=3mm,inner sep=0pt}]
  \node[state,initial]   (qa0)                      {};
  \node[state]           (qa1) [right of=qa0]       {};
  \node[state]           (qa2) [right of=qa1]       {};
  \node[state,accepting] (qa3) [right of=qa2]       {};
  
  \path[->] (qa0) edge              node [above]     {$b$}        (qa1)
            (qa1) edge              node [above,near start]      {$b$}        (qa2)
            (qa2) edge [loop above] node [left]      {$a$}        (qa2
            (qa2) edge [bend left] node [above]      {$c$}        (qa3)
            (qa3) edge [bend left] node [above]      {$a$}        (qa2);

  \node[state,initial]   (qb0) [below=7mm of qa0] {};
  \node[state]           (qb1) [right of=qb0]                      {};
  \node[state]           (qb2) [right of=qb1]                      {};
  \node[state,accepting] (qb3) [right of=qb2]                      {};
  \node[state]           (qb4) [right of=qb3]                      {};
  
  \path[->] (qb0) edge              node [above]      {$c$}        (qb1)
            (qb1) edge              node [above]      {$b$}        (qb2)
            (qb2) edge              node [above]      {$b$}        (qb3)
            (qb3) edge [bend left]  node [above]      {$c$}        (qb4)
            (qb4) edge [bend left]  node [below]      {$a$}        (qb3);

\end{tikzpicture}
\end{minipage}} \\
move made by \spoiler. \duplicator's winning strategy consists of
skipping her turn until \spoiler produces a $c$. Note that he cannot
produce more than 2 $b$'s beforehand, hence he cannot win by exceeding
the capacity of the second buffer. Note also that he cannot loop on
the first $a$-loop for ever, otherwise he will lose for not producing
an accepting run. Once \spoiler eventually produced a~$c$, \duplicator
consumes it together with the entire content of the second buffer and
moves to the accepting state in her automaton. After that she can
immediately respond to every state-changing move by \spoiler.
\end{example}

The following theorem shows indeed that multi-buffer games approximate
the inclusion between the trace closures of the languages of two NBA.

\begin{theorem}
  \label{thm:approximate}
  Let $\sigma=(\Sigma_i)_{i\in[k]}$ be a trace alphabet and let
  $\kappa$ be a capacity function for $k$ buffers.  Let $\mathcal{A}$
  and $\mathcal{B}$ be two NBA over $\Sigma$ with
  $\mathcal{A} \DKsimu{\kappa}{\sigma} \mathcal{B}$. Then
  $L(\mathcal{A}) \subseteq [L(\mathcal{B})]_\sigma$.
\end{theorem}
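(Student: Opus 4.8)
The plan is to take an arbitrary word $w_{\mathcal A}\in L(\mathcal A)$ and exhibit a $\sigma$-equivalent word in $L(\mathcal B)$, by having \spoiler play an accepting run of $\mathcal A$ on $w_{\mathcal A}$ and letting \duplicator's winning strategy produce the witness. First I would fix an accepting run $\rho_{\mathcal A}$ of $\mathcal A$ on $w_{\mathcal A}$ and let \spoiler follow this run literally, revealing one letter (and one transition) of $\rho_{\mathcal A}$ per round. Against this \spoiler-strategy, \duplicator's winning strategy for $\DKsimugame{\kappa}{\sigma}{\mathcal{A}}{\mathcal{B}}$ yields an infinite play. Since \spoiler never gets stuck (he is walking along a genuine infinite run) and since \duplicator wins, the play must be infinite, so it produces $\rho_{\mathcal A}$, some run $\rho_{\mathcal B}$ of $\mathcal B$ on a word $w_{\mathcal B}\in\Sigma^\infty$, with the first disjunct of the winning condition failing — $\rho_{\mathcal A}$ is accepting — hence the second disjunct holds: $\rho_{\mathcal B}$ is an infinite accepting run, so $w_{\mathcal B}\in L(\mathcal B)$, and for every $a\in\Sigma$ the letter $a$ occurs equally often in $w_{\mathcal A}$ and in $w_{\mathcal B}$.

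It remains to upgrade "equal letter counts" to genuine trace equivalence $w_{\mathcal A}\sim_\sigma w_{\mathcal B}$, i.e.\ $\pi_i(w_{\mathcal A})=\pi_i(w_{\mathcal B})$ for every $i\in[k]$. This is where the buffer discipline does the real work, and I expect it to be the main technical step. The key invariant is that after each round the content $\beta_i$ of buffer $i$ is exactly the suffix of $\pi_i$ applied to \spoiler's output-so-far that \duplicator has not yet consumed; more precisely, if $u$ is the finite word \spoiler has revealed so far and $v$ the finite word \duplicator has produced so far, then $\pi_i(v)\,\beta_i=\pi_i(u)$ for each $i$ (this follows by induction from the move rule $\pi_i(a)\beta_i=\beta_i'\pi_i(v_1\cdots v_n)$, noting that \duplicator consumes from a buffer only letters belonging to $\Sigma_i$, hence in FIFO order $\pi_i$ of her new chunk must be a prefix of $\pi_i(a)\beta_i$). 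Thus $\pi_i(v)$ is always a prefix of $\pi_i(u)$, and passing to the limit, $\pi_i(w_{\mathcal B})$ is a prefix of $\pi_i(w_{\mathcal A})$ (or equals it, in the infinite case — a short argument handles whether $\pi_i(w_{\mathcal A})$ is finite or infinite). Finally, the winning condition "every letter written is eventually read" forces $\beta_i\to$ empty in the limit, so no letter of $\Sigma_i$ is left unconsumed forever; combined with the prefix property this gives $\pi_i(w_{\mathcal B})=\pi_i(w_{\mathcal A})$ for all $i$, i.e.\ $w_{\mathcal A}\sim_\sigma w_{\mathcal B}$, and hence $w_{\mathcal A}\in[L(\mathcal B)]_\sigma$.

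The main obstacle is making the FIFO-buffer bookkeeping precise: one must check that the move rule really implies $\pi_i(v)$ extends consistently (in particular that what \duplicator pops from buffer $i$ is a prefix of its current content, which is what "pops an $a_i$ from each buffer in $\sigma(a_i)$" means read letter-by-letter), and one must be careful about the case distinction of whether the projections are finite or infinite when taking limits — e.g.\ if $\pi_i(w_{\mathcal A})$ is finite but \duplicator never empties buffer $i$, that contradicts the winning condition. Once the invariant $\pi_i(v)\beta_i=\pi_i(u)$ is established and shown preserved by each round, the rest is a routine limit argument, and the theorem follows.
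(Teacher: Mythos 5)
Your proposal is correct and follows essentially the same route as the paper's proof: \spoiler plays an accepting run on an arbitrary $w_{\mathcal A}\in L(\mathcal A)$, \duplicator's winning strategy yields an accepting run of $\mathcal B$, and for each $i$ the projection $\pi_i(w_{\mathcal B})$ is a prefix of $\pi_i(w_{\mathcal A})$ which must be improper by the ``no letter stays in a buffer forever'' clause, giving $w_{\mathcal A}\sim_\sigma w_{\mathcal B}$. Your explicit invariant $\pi_i(v)\,\beta_i=\pi_i(u)$ merely spells out the bookkeeping the paper leaves implicit.
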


\begin{proof}
  Let $w_{\mathcal A}=a_0a_1a_2\dots\in L(\mathcal A)$ be
  arbitrary. Then \spoiler can play such that $\rho_{\mathcal A}$ is
  an accepting run over $w_{\mathcal A}$. Since \duplicator has a
  winning strategy, she can play in such a way that also
  $\rho_{\mathcal B}$ is an accepting run and no letter remains in a
  buffer for ever. Now let $1\le i\le k$. Then
  $\pi_i(w_{\mathcal A})\in\Sigma_i^\infty$ is the sequence of letters
  that \spoiler writes into the buffer $i$ during the play. Since
  \duplicator can only execute letters that are available at the
  corresponding buffers, the word $\pi_i(w_{\mathcal B})$ is a prefix
  of~$\pi_i(w_{\mathcal A})$. If it is a proper prefix, then
  \duplicator failed to read all letters written into buffer $i$. As
  \duplicator plays according to her winning strategy, this is not the
  case. Hence $\pi_i(w_{\mathcal A})=\pi_i(w_{\mathcal B})$. Since
  this holds for all $i\in[k]$, we have
  $w_{\mathcal A}\sim_\sigma w_{\mathcal B}$ and therefore
  $L(\mathcal A)\subseteq[L(\mathcal B)]_\sigma$. 
\end{proof}

This yields, together with the following observation, a sound (but not
necessarily complete) approximation procedure for trace language
inclusion problems using bounded buffers.

\begin{theorem}
  \label{thm:decidability}
  Uniformly in the trace alphabet $\sigma=(\Sigma_i)_{i\in[k]}$ and
  the capacity function $\kappa\colon[k]\to\Nat$, the relation
  $\DKsimu{\kappa}{\sigma}$ is decidable on automata with $m$ and $n$
  states, resp., in time $O((k+1)\cdot(mn|\Sigma|^{r+k}(k+1))^{2.5})$
  where $r = \kappa(1) + \ldots + \kappa(k)$.
\end{theorem}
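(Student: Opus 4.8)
The plan is to reduce the existence of a winning strategy for \duplicator in $\DKsimugame{\kappa}{\sigma}{\mathcal{A}}{\mathcal{B}}$ to solving a finite parity (indeed, a simple B\"uchi-type) game on a suitable product arena, and then to invoke a standard polynomial-time algorithm for such games to obtain the stated bound. First I would observe that, since all capacities are finite, both the buffer contents and the relevant game state can be recorded in a finite arena: the vertices will essentially be the configurations $(p,\beta_1,\dots,\beta_k,q)$ with $|\beta_i|\le\kappa(i)$, together with an extra bit or marker recording the phase of a round (\spoiler to move, or \duplicator to move after a given output $ap'$). The number of buffer contents is bounded by $|\Sigma_i|^{0}+\dots+|\Sigma_i|^{\kappa(i)}=O(|\Sigma|^{\kappa(i)})$, so the total number of buffer valuations is $O(|\Sigma|^{r})$ where $r=\sum_i\kappa(i)$, and multiplying by $m$, $n$, and the $O(|\Sigma|)$ choices of the last letter written by \spoiler gives $O(mn|\Sigma|^{r+1})$ vertices; I will refine this count below.

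The main subtlety is that a single \duplicator move is a \emph{finite run} $qv_1q_1\dots v_nq'$ of unbounded length $n$, so a naive arena with one edge per \duplicator move would be infinite. The standard fix is to let \duplicator play her run \emph{letter by letter} in intermediate vertices: after \spoiler outputs $ap'$, play enters a \duplicator-controlled sub-arena in which she repeatedly either reads one more letter $b$ from $\mathcal B$ (updating $q$ along a $b$-transition and popping $b$ from every buffer in $\sigma(b)$, which must be possible, i.e.\ $b$ is the oldest symbol in each such buffer) or declares the move finished (possible only if no buffer with capacity $0$ still contains a letter, and if the resulting buffer sizes respect $\kappa$). One must be careful to forbid \duplicator from looping forever \emph{inside} one round without finishing: this is handled by making "finish the round" the only \duplicator-winning way to leave the sub-arena and by treating an infinite stay inside a single round as a loss for her (e.g.\ the sub-arena vertices carry colour $1$ / non-accepting while round boundaries where she finished carry the good colour). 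The blow-up of this sub-arena is a further factor of $k+1$ (she has read between $0$ and $\sum_i\kappa(i)+1$ — actually at most $k+1$ more letters than the capacities strictly force within one round; a cleaner bound records how many of the $k$ buffers she has already serviced this round), which accounts for the $(k+1)$ factors appearing in the theorem.

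Next I would encode the winning condition. \duplicator wins an infinite play iff either $\rho_{\mathcal A}$ is not accepting, or $\rho_{\mathcal B}$ is accepting and no letter stays in a buffer forever. The condition "no letter stays forever" is implied, in a finite arena, by the requirement that infinitely often all finite-capacity buffers are simultaneously empty — but here all buffers are finite-capacity, so it is equivalent to: the configuration with all buffers empty is visited infinitely often, \emph{provided} we also know \spoiler writes infinitely often (if from some point on \spoiler never writes to buffer $i$ and \duplicator never reads from it, the buffer is stuck but no letter is "lost" only if it was already empty). Combining the three B\"uchi-type requirements ($F^{\mathcal A}$ seen finitely often, OR [$F^{\mathcal B}$ seen infinitely often AND all-buffers-empty seen infinitely often AND no round left unfinished forever]) yields a game whose winning condition is a Boolean combination of B\"uchi/co-B\"uchi conditions; such a condition is expressible as a parity condition with a constant number of priorities, so the game is solvable in time polynomial in the arena size — concretely, using the $O(N^{2.5})$-type bound for these low-index games (or Rabin/Streett with a constant pair count) on an arena with $N$ vertices and $O(N\cdot(k+1))$ edges, one gets the claimed $O((k+1)\cdot(mn|\Sigma|^{r+k}(k+1))^{2.5})$ after substituting $N=O(mn|\Sigma|^{r+k}(k+1))$.

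Finally, I would check correctness of the reduction in both directions: a winning strategy for \duplicator in the game-theoretic sense on the arena translates back to a winning strategy in $\DKsimugame{\kappa}{\sigma}{\mathcal{A}}{\mathcal{B}}$ by concatenating the per-round letter-by-letter choices into the finite runs $v_1q_1\dots v_nq'$, and the stuck/loss conventions match (a \duplicator vertex with no legal move corresponds exactly to her being unable to shorten a capacity-$0$ buffer or exceeding a capacity), and conversely. The step I expect to be the main obstacle is getting the bookkeeping of the within-round sub-arena exactly right — in particular ruling out infinite intra-round play without accidentally forbidding legitimately long (but finite) \duplicator responses, and making sure the "all buffers eventually emptied" semantic condition is faithfully captured by an $\omega$-regular condition on the \emph{finite} arena rather than only approximated; everything else (vertex counting, assembling the Boolean combination into a parity condition, invoking the game-solving algorithm) is routine.
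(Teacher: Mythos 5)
Your reduction to a finite game on the configuration arena is the right overall approach (and matches the paper's), but your encoding of the winning condition ``no letter stays in a buffer forever'' is wrong, and this is not a cosmetic issue. You replace it by ``infinitely often all buffers are simultaneously empty''. That condition is strictly stronger than the actual one, so your algorithm would wrongly declare \spoiler the winner on instances where \duplicator in fact wins. Concretely, take $\sigma=(\{a\},\{c\})$, $\kappa=(2,2)$, $\mathcal{A}$ accepting exactly $(ac)^\omega$ and $\mathcal{B}$ accepting exactly $aaa(ca)^\omega$ (a single cycle $q_3\xrightarrow{c}q_4\xrightarrow{a}q_3$ reached by $aaa$). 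The two words are $\sigma$-equivalent, and \duplicator wins $\DKsimugame{(2,2)}{\sigma}{\mathcal A}{\mathcal B}$: she reads $a_1,a_2$ in rounds $1,3$, then from round $5$ on reads $a_{j+2}c_j$ every other round, never exceeding capacity $2$ and eventually consuming every letter. However, every prefix of $aaa(ca)^\omega$ of length $\ge 2$ contains at least two more $a$'s than $c$'s, while \spoiler's prefixes contain at most one more $a$ than $c$; hence from round $3$ onwards the $c$-buffer is provably nonempty at every configuration (and at every intermediate step of a \duplicator move). So ``all buffers simultaneously empty'' holds only finitely often in \emph{every} play, and your game is lost by \duplicator. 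The correct $\omega$-regular encoding is per buffer, as in the paper: for each $i\in[k]$ separately, ``infinitely often, \duplicator reads a letter from buffer $i$ \emph{or} buffer $i$ is empty''. Each of these disjuncts alone already implies (with FIFO discipline and finite capacity) that every letter written to buffer $i$ is eventually read, and conversely one of them must hold infinitely often in any play where nothing is lost; this is why the paper augments positions with an element of $\{0,\dots,k\}$ recording a buffer serviced in the last move, and why the winning condition becomes one Streett pair plus $k$ B\"uchi conditions, handled by the cited algorithm with the stated $(k+1)\cdot N^{2.5}$ bound.

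A secondary remark: the ``main subtlety'' you spend effort on is not actually there. Every letter of \duplicator's response lies in some $\Sigma_i$ (since $\sigma(a)\neq\emptyset$ for every $a$) and is popped from the corresponding buffers, whose total content during a round is at most $\sum_i(\kappa(i)+1)=r+k$. So her moves have length at most $r+k$, the set of moves from any configuration is finite, and no letter-by-letter sub-arena or ``infinite intra-round play'' bookkeeping is needed; the paper simply treats her move as atomic. Your vertex count and the invocation of a polynomial game-solving algorithm are otherwise in line with the paper, but as it stands the proof does not establish the theorem because the reduction is not faithful.
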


If we fix $k$ and the capacity function $\kappa$, then this time bound
reduces to a polynomial in $mn|\Sigma|$, i.e., in the size of the
automata $\mathcal A$ and $\mathcal B$.

\begin{proof}
  Let $\mathcal A$ and $\mathcal B$ be automata with $m$ and $n$
  states, resp. Then
  $\DKsimugame{\kappa}{\sigma}{\mathcal{A}}{\mathcal{B}}$ can be
  understood as a game whose positions consist of a configuration and
  an element of $\{0,1,\dots,k\}$ to store one of the buffers
  \duplicator used in her last move ($0$ stands for ``\duplicator
  skiped her move''), i.e., of finite size
  $ \le m\cdot n\cdot \prod_{i=1}^k |\Sigma_i|^{\kappa(i)+1}\cdot
  (k+1) \le mn \cdot|\Sigma|^{r+k}\cdot(k+1)$. Its winning condition
  is a strong fairness condition (``\emph{if} \spoiler visits final
  states infinitely often \emph{then} so does \duplicator'') together
  with $k$ B\"uchi-conditions (``infinitely often, \duplicator reads
  some letter from buffer $i$ or buffer $i$ is empty''). By
  \cite{Hen16}, such games can be solved in time
  $O((k+1)\cdot(mn|\Sigma|^{r+k}(k+1))^{2.5})$.
\end{proof}

Multi-buffer simulations form a hierarchy in the sense that
\duplicator's power strictly grows with the buffer capacities.

\begin{theorem}
  \label{thm:hierarchies}
  Let $\sigma=(\Sigma_i)_{i\in[k]}$ be a trace alphabet and let
  $\kappa,\kappa'$ be capacity functions for $k$ buffers.

  If $\kappa(i) \le \kappa'(i)$ for all $i \in [k]$, then
  $\DKsimu{\kappa}{\sigma} \subseteq \DKsimu{\kappa'}{\sigma}$.

  Moreover, if there are $a\in\Sigma_i$ and $b\in\Sigma$ with
  $\sigma(a)\cap\sigma(b)=\emptyset$ and $\kappa(i)<\kappa'(i)$, then
  there are automata $\mathcal{A}$ and $\mathcal{B}$ such that
  $\mathcal{A} \DKnotsimu{\kappa}{\sigma} \mathcal{B}$ but
  $\mathcal{A} \DKsimu{\kappa'}{\sigma} \mathcal{B}$.
\end{theorem}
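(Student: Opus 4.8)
The first claim (monotonicity) is the easy direction: I would argue that a winning strategy for \duplicator in $\DKsimugame{\kappa}{\sigma}{\mathcal{A}}{\mathcal{B}}$ is literally a winning strategy in $\DKsimugame{\kappa'}{\sigma}{\mathcal{A}}{\mathcal{B}}$. The only difference between the two games is the bound checked on the buffers after each round; since $\kappa(i)\le\kappa'(i)$, every configuration that is legal in the $\kappa$-game is legal in the $\kappa'$-game, and every move of \spoiler legal in the $\kappa'$-game is also legal in the $\kappa$-game (because \spoiler may write into a full buffer in either game — the bound is only checked at the end of the round). Hence runs are in one-to-one correspondence and the winning condition (a strong-fairness condition plus ``no letter stays forever'') is identical. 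So \duplicator simply replays her $\kappa$-strategy. A short remark has to be made that \duplicator never gets stuck in the $\kappa'$-game when following this strategy, which is immediate since the buffer contents are identical in both plays.

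For the separation, fix $i$, $a\in\Sigma_i$ and $b\in\Sigma$ with $\sigma(a)\cap\sigma(b)=\emptyset$, and set $c:=\kappa(i)$, so $c<\kappa'(i)$. I would build small automata over an alphabet containing $a$ and $b$ so that, intuitively, \spoiler is allowed to ``stockpile'' up to $c+1$ copies of $a$ into buffer $i$ before he is forced to emit a $b$, and \duplicator can only make progress on her side once she has seen that $b$. Concretely, $\mathcal{A}$ reads $a^{c+1}$, then $b$, and then loops on some fresh accepting part (or just on $a$); $\mathcal{B}$ is forced to read the $b$ first — using the fact that $b$ does not share a buffer with $a$, so $\mathcal{B}$ may consume $b$ from its buffers without touching the $a$'s sitting in buffer $i$ — and only afterwards is allowed to consume the $a^{c+1}$ it now must have buffered. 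With capacity $\kappa'(i)\ge c+1$ this is fine: \duplicator skips until \spoiler emits $b$, then flushes $b$ and the $c+1$ buffered $a$'s and matches the rest. With capacity $\kappa(i)=c$, after \spoiler has written $a^{c+1}$ into buffer $i$ — which he must do before the round in which he emits $b$ — the buffer already holds more than $c$ letters unless \duplicator has read some $a$ earlier, which the construction of $\mathcal{B}$ forbids (she cannot read any $a$ before reading $b$). Hence \duplicator is stuck and loses, giving $\mathcal{A}\DKnotsimu{\kappa}{\sigma}\mathcal{B}$.

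The main obstacle is getting the details of $\mathcal{B}$ right so that two requirements hold simultaneously: (1) \duplicator genuinely \emph{cannot} pop any $a$ from buffer $i$ until after she has matched the $b$, so that in the $\kappa$-game the buffer overflows; and (2) once the $b$ has been produced, \duplicator \emph{can} catch up completely, so that in the $\kappa'$-game she wins the strong-fairness-plus-flush condition. Point (1) is what forces $\mathcal{B}$ to have its $b$-transition strictly before any $a$-transition on the relevant path; point (2) requires the tail of $\mathcal{A}$ after the forced $b$ to be something \duplicator can mimic one-letter-per-round (e.g. an $a$-self-loop through an accepting state, matched by an $a$-self-loop in $\mathcal{B}$) while the buffered $a$'s are drained. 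One has to double-check the boundary behaviour of buffers of capacity $0$ (if some $\kappa(j)=0$ for $j\neq i$, the letters used must avoid such buffers, or the construction must respond to them immediately), and that \spoiler cannot escape by never emitting $b$ — he can't, since then his run is not accepting and \duplicator wins vacuously, so he is forced into the line of play described. Once the two automata are pinned down, both directions are a direct strategy argument with no real computation.
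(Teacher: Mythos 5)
Your proposal matches the paper's proof in both parts: monotonicity is obtained by letting \duplicator replay her $\kappa$-strategy verbatim, and the separation uses exactly the automata you sketch --- $\mathcal{A}$ forced through $a^{\kappa(i)+1}$ before each $b$ and $\mathcal{B}$ required to read the $b$ first, so that the independence $\sigma(a)\cap\sigma(b)=\emptyset$ lets \duplicator reorder in the $\kappa'$-game while buffer $i$ overflows in the $\kappa$-game (the paper simply loops the block $a^{\kappa(1)+1}b$ through an accepting state rather than appending a tail). The one loose end you flag --- the case where $a$ also lies in some \emph{other} buffer of capacity too small to hold the $\kappa(i)+1$ stockpiled $a$'s --- is not addressed in the paper either, whose argument tacitly behaves as if $\sigma(a)=\{1\}$ and $\sigma(b)=\{2\}$, so your plan is at least as careful as the published proof on this point.
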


\begin{proof}
  We immediately get
  $\DKsimu{\kappa}{\sigma} \subseteq \DKsimu{\kappa'}{\sigma}$ for
  $\kappa \le \kappa'$ since any winning strategy for \duplicator in
  $\DKsimugame{\kappa}{\sigma}{\mathcal{A}}{\mathcal{B}}$ is also a
  winning strategy for her in
  $\DKsimugame{\kappa'}{\sigma}{\mathcal{A}}{\mathcal{B}}$.
  \vspace*{-3.4mm}

\noindent
\parbox{0.69\textwidth}{ \quad\enspace For the strictness part suppose
  w.l.o.g.\ $a\in\Sigma_1$, $b\in\Sigma_2$,
  $\sigma(a)\cap\sigma(b)=\emptyset$, and $\kappa(1)<\kappa'(1)$. Then
  consider these two NBA $\mathcal{A}$ (top) and $\mathcal{B}$ (below)
  over $\Sigma$. \duplicator wins the game
  $\DKsimugame{\kappa'}{\sigma}{\mathcal A}{\mathcal B}$ by simply
  choosing $ba^{\kappa(1)+1}$ every $\kappa(1)+1$ rounds (and skipping
  in the other rounds). \spoiler wins the game
  $\DKsimugame{\kappa}{\sigma}{\mathcal A}{\mathcal B}$ choosing $a$
  in the first $\kappa(1)+1$ rounds such that \duplicator is forced to
  skip in the first}
\raisebox{3mm}{\begin{minipage}{0.29\textwidth}
\begin{tikzpicture}[initial text={}, node distance=10mm, every state/.style={minimum size=3mm,inner sep=0pt}]
  \node[state,initial,accepting]   (qa0)                      {};
  \node[state]                     (qa1) [right of=qa0]       {};  
  \node        		           (dot) [right of=qa1] 	    {$\cdots$};
  \node[state]                     (qa2) [right of=dot]       {};

  \path[->] (qa0) edge                 node [below]      {$a$}                       (qa1)
	    (qa1) edge 		       node [below]      {$a$} 	                    (dot)
	    (dot) edge                 node [below]      {$a$}                       (qa2)
	    (qa2) edge [bend right=20] node [above,very near start]      {$b$}     (qa0);
          
  \node[state,initial,accepting]   (qb0)  [below=8mm of qa0]   {};
  \node[state]                     (qb1)  [right of=qb0]                      {};  
  \node        		           (dot2) [right of=qb1]                      {$\cdots$};
  \node[state]                     (qb2)  [right of=dot2]                     {};

  \path[->] (qb1) edge                node [below]                        {$a$}     (qb0)
	    (dot2) edge               node [below]                        {$a$}     (qb1)
	    (qb2) edge                node [below]      {$a$}     (dot2)
	    (qb0) edge [bend left=20] node [above,very near start]      {$b$}     (qb2);
          
\draw [decorate,decoration={brace,amplitude=5pt,mirror,raise=2mm}] 
(qb0.south east) -- (qb2.south west) node [midway, below=3mm] {$\kappa(1)+1$};
\end{tikzpicture}
\end{minipage}} \\[.8mm]
$\kappa(1)+1$ rounds (since no $b$ is available in the second
buffer) which exceeds the capacity of the second buffer.
\end{proof}

Thms.~\ref{thm:approximate}, \ref{thm:decidability} and
\ref{thm:hierarchies} can be used for an incremental inclusion test:
suppose we want to check whether
$L(\mathcal A) \subseteq [L(\mathcal B)]_\sigma$ holds for the
trace alphabet $\sigma=(\Sigma_i)_{i\in[k]}$.  First consider
$\kappa_0$ with $\kappa_0(i)=0$ for all $i \in [k]$. If
$\mathcal{A} \DKsimu{\kappa_0}{\sigma} \mathcal{B}$, then
$L(\mathcal A)\subseteq L(\mathcal B)\subseteq [L(\mathcal
B)]_\sigma$. If this is not the case, chose $\kappa_1$ with
$\kappa_0(i)\le\kappa_1(i)$ for all $i$ and
$\kappa_0(i) < \kappa_1(i)$ for some $i$. If
$\mathcal{A} \DKsimu{\kappa_1}{\sigma} \mathcal{B}$, then
$L(\mathcal A)\subseteq [L(\mathcal B)]_\sigma$. If, again, this
fails, then extend the buffer capacities to some $\kappa_2$ etc.
Sect.~\ref{sec:sufficiency} analyses completeness of this procedure,
i.e.\ the possibility for this to prove trace language non-inclusion.

\section{Undecidability}
\label{sec:undecidability}

It is not hard to show that multi-buffer simulation is in general
undecidable by a reduction from Post's Correspondence Problem (PCP)
adapting the argument used for the reachability problem for
communicating finite state machines~\cite{BrandZ83,ChambartS08} and
yielding $\Pi_1^0$-hardness of
$\DKsimu{(\omega,\omega)}{(\{a,b\},\{c,d\})}$.  There is a variant of
PCP called $\omega$PCP(REG) that is known to be $\Sigma^1_1$-complete
\cite{journals/ijfcs/Finkel12}. It asks for the existence of an
infinite solution word that additionally belongs to some
$\omega$-regular language.  It is not difficult to adjust the
reduction to the $\DKsimu{(\omega,\omega)}{(\{a,b\},\{c,d\})}$-problem
such that \spoiler's accepting runs correspond to valid
solutions. This would yield $\Pi^1_1$-hardness of
$\DKsimu{(\omega,\omega)}{(\{a,b\},\{c,d\})}$.

We do not give details of this reduction here because it is still
possible to strengthen the undecidability result in two ways: (1) we
will show that a single unbounded buffer suffices for
undecidability. Note that $\DKsimu{(\omega)}{\Sigma}$ is decidable in
EXPTIME \cite{DBLP:journals/corr/HutagalungLL14}.  Hence, the question
is how many additional bounded buffers are needed to establish
undecidability. We provide a tight result in this respect showing that
$\DKsimu{(\omega,0)}{(\{a,b\},\{c\})}$ is undecidable already, i.e.\
the addition of a minimal number of buffers of minimal capacity
(contrary to this, $\DKsimu{(\omega,0)}{(\{a\},\{c,d\})}$ is shown to
be decidable). (2) We will show that the level of undecidability is
genuinely higher than $\Pi^1_1$ by considering the problem of solving
a B\"uchi game played on a recursive game graph.

A \emph{recursive B\"uchi game} (RBG) is a graph
$G = (V,E,\mathit{Own},\mathit{Fin},v_{\mathsf{I}})$ such that $V$ is
a decidable set of nodes, $\mathit{Own}$ and $\mathit{Fin}$ are
decidable subsets of $V$, $v_{\mathsf{I}}$ is a designated starting
node, and $E \subseteq V\times V$ is a decidable set of edges. The
game is played between players $0$ and $1$ starting in
$v_0 := v_{\mathsf{I}}$. Whenever it reaches a node $v_i$ and
$v_i \in \mathit{Own}$ then player $0$ chooses $v_{i+1}\in V$ such
that $(v_i,v_{i+1}) \in E$ and the play continues with
$v_{i+1}$. Otherwise player $1$ chooses such a node $v_{i+1}$.

A player wins a play if the opponent is unable to choose a successor
node. Moreover, player $0$ wins an infinite play $v_0, v_1, \ldots$ if
there are infinitely many $i$ such that $v_i \in \mathit{Fin}$. The
\emph{recursive B\"uchi game problem} is to decide, given such a game
represented using Turing machines, whether or not player $0$ has a
winning strategy for this game.

The existence of a winning strategy for player 0 in an RBG is a
typical $\Sigma^1_2$-statement (``\emph{there exists} a strategy for
player 0 \emph{such that all} plays $(v_i)_{i\ge0}$ conforming to this
strategy satisfy
$\forall n\in\Nat\,\exists m\in\Nat\colon v_{m+n}\in\mathit{Fin}$''),
i.e., the RBG problem belongs to $\Sigma^1_2$. By determinacy of Borel
(and therefore of B\"uchi) games~\cite{Mart75}, the existence of a
winning strategy for player 0 is equivalent to the non-existence of a
winning strategy for player 1 (i.e., to ``\emph{for all} strategies of
player 1 \emph{there exists} a play conforming to this strategy
satisfying
$\forall n\in\Nat\,\exists m\in\Nat\colon
v_{m+n}\in\mathit{Fin}$''). Hence the RBG problem also belongs to
$\Pi^1_2$ and therefore to $\Sigma^1_2\cap\Pi^1_2$. This class does
not contain any complete problems \cite[Thm.~16.1.X]{Rog67}, but we
can show the following lower bound for the RBG problem.

\begin{theorem}
\label{thm:RGBhard}
  The recursive B\"uchi game problem is hard for the class
  $B\Sigma^1_1$ of all Boolean combinations of problems from
  $\Sigma^1_1$.
\end{theorem}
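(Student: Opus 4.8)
The goal is to show that the recursive Büchi game (RBG) problem is $B\Sigma^1_1$-hard, i.e.\ every Boolean combination of $\Sigma^1_1$-sets many-one reduces to it. A Boolean combination of $\Sigma^1_1$-sets can be put into a normal form, e.g.\ a finite disjunction of sets of the form $S_j \setminus S_j'$ with $S_j, S_j' \in \Sigma^1_1$, or — cleaner for our purpose — it suffices to handle (i) membership in a $\Sigma^1_1$-set, (ii) membership in a $\Pi^1_1$-set, and (iii) closure under finite union and intersection of the reductions. So the plan is: first show RBG is $\Sigma^1_1$-hard, then $\Pi^1_1$-hard, then show the class of problems reducing to RBG is closed under $\vee$ and $\wedge$, and finally observe that these closure properties plus $\Sigma^1_1$- and $\Pi^1_1$-hardness yield $B\Sigma^1_1$-hardness (since every element of $B\Sigma^1_1$ is a finite Boolean combination, hence built from $\Sigma^1_1$- and $\Pi^1_1$-pieces by $\vee$ and $\wedge$ using De Morgan).

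\textbf{The individual reductions.}
For $\Sigma^1_1$-hardness: a $\Sigma^1_1$-predicate on an input $x$ has the form ``there is an infinite branch through a recursive tree $T_x \subseteq \Nat^{<\omega}$''. Build an RBG where \emph{player $0$} owns all nodes, the node set is $T_x$ (a decidable set, uniformly in $x$), edges are the child relation, every node is in $\mathit{Fin}$, and the start node is the root; player $0$ wins iff he can keep choosing successors forever, i.e.\ iff $T_x$ has an infinite branch. (If $T_x$ is finite at the root we must be slightly careful, but recursive trees of the relevant kind can be assumed infinite-at-the-root or we add a self-loop fallback sink owned by player $1$ that is losing — standard bookkeeping.) For $\Pi^1_1$-hardness: a $\Pi^1_1$-predicate says ``$T_x$ is well-founded'', i.e.\ player $1$, who now owns all nodes of $T_x$, \emph{cannot} build an infinite branch; make $\mathit{Fin}=\emptyset$ so that player $0$ wins exactly the infinite plays that do not occur, i.e.\ player $0$ wins iff every play is finite iff $T_x$ is well-founded. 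Both constructions are clearly uniform (the Turing machines for $V,E,\mathit{Own},\mathit{Fin}$ are obtained effectively from the index of $T_x$), so they are genuine many-one reductions.

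\textbf{Closure under $\vee$ and $\wedge$.}
Given RBG-reductions $x \mapsto G_0$ and $x \mapsto G_1$ for predicates $P_0,P_1$, form the disjunction by adding a fresh start node $v_{\mathsf I}$ owned by player $0$ with edges to the start nodes of $G_0$ and $G_1$: player $0$ wins the new game iff he can steer into some $G_i$ he wins, i.e.\ iff $P_0(x)\vee P_1(x)$. For the conjunction, do the same with $v_{\mathsf I}$ owned by player $1$: now player $1$ picks which subgame to enter, so player $0$ wins iff he wins \emph{both}, i.e.\ iff $P_0(x)\wedge P_1(x)$. Here $v_{\mathsf I}\notin\mathit{Fin}$ (it is visited only once, so it is irrelevant either way). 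Disjointness of the node sets of $G_0,G_1$ can be ensured by tagging. These constructions are again uniform, and they preserve ``being a reduction to RBG''.

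\textbf{Expected main obstacle.}
None of the pieces is individually deep; the point requiring care is the \emph{bookkeeping to make everything uniform} — in particular that a $B\Sigma^1_1$-set comes with a \emph{fixed} finite Boolean combination pattern, so we only iterate the $\vee/\wedge$-closure a fixed number of times, and that the $\Sigma^1_1$/$\Pi^1_1$ ``recursive tree'' representations are obtained effectively and uniformly in the input. A second, more cosmetic subtlety is the handling of finite plays in the $\Sigma^1_1$-reduction (a branch of $T_x$ may be finite but maximal), which is why we must route ``player $0$ got stuck'' to a loss for him; the clean fix is to only accept infinite branches by the $\mathit{Fin}$-condition and treat stuckness by the default rule ``the player who cannot move loses'', which already does the right thing. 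Once these are pinned down, the theorem follows by assembling the three ingredients.
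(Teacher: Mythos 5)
Your proposal is correct and reaches the theorem, but it organizes the argument differently from the paper at one key point. The shared core is the same: both proofs start from the $\Sigma^1_1$-completeness of ``this recursive tree has an infinite branch'' (Kleene), turn that statement and its negation into B\"uchi games (player~$0$ building a branch with all nodes final, resp.\ player~$1$ trying to build a branch with no final nodes / forced into a final sink), and realize disjunction by a fresh player-$0$-owned choice node over disjoint copies. The genuine divergence is the conjunction step. The paper fixes the normal form ``finite union of pairs ($S_i$ has a branch) and ($T_i$ has none)'' and realizes each conjunction as a \emph{synchronized product} of the two component games; making that product well-formed is exactly why the paper subdivides every edge into a path of length two, so that the positions where player~$0$ genuinely chooses in the $S_i$-component interleave with those where player~$1$ genuinely chooses in the $T_i$-component. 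You instead take a DNF over $\Sigma^1_1$- and $\Pi^1_1$-literals and realize conjunction by a fresh initial node \emph{owned by player~$1$}, who adversarially selects which conjunct-game is actually played; player~$0$ then wins iff he wins both subgames, which follows directly from the definition of winning strategy (no determinacy and no product are needed). Your gadget is simpler and eliminates the subdivision bookkeeping entirely; the two normal forms are interchangeable. Two small points to pin down in a full write-up: in your $\Pi^1_1$-game the equivalence ``player~$0$ wins iff every play is finite'' uses that player~$1$ owns \emph{all} nodes, so every maximal finite play ends with player~$1$ stuck; and, as you note, the Boolean pattern is fixed by the given $B\Sigma^1_1$-set, so the $\vee/\wedge$-gadgets are applied only a constant number of times and uniformity of the overall many-one reduction is preserved.
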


\begin{proof}
  To see this, recall that the set of (pairs of Turing machines
  accepting the nodes and edges of) recursive trees with an infinite
  branch is $\Sigma^1_1$-hard \cite{Kle43}. It follows that the class
  of tuples $(S_i,T_i)_{1\le i\le n}$ of recursive trees such that,
  for some $1\le i\le n$, the tree $S_i$ has an infinite branch while 
  $T_i$ does not, is complete for the class $B\Sigma^1_1$. We
  reduce this problem to the recursive B\"uchi game problem. So let
  $(S_i,T_i)_{1\le i\le n}$ be a tuple of recursive trees. We build a
  B\"uchi game as follows:
 First, to any tree $S_i$, we add a node $g_i$ together with
    edges from all nodes (including $g_i$ itself) to~$g_i$. The set
    $\mathit{Own}$ equals the set of nodes of $S_i$ plus this
    additional node $g_i$. Next, we replace every edge by a path of
    length $2$ (i.e., with two edges). The set $\mathit{Fin}$ of
    winning nodes are the original nodes from $S_i$. Starting in
    $v_0$, player 0 has a winning strategy of this B\"uchi game $G_i$
    iff the tree $S_i$ contains an infinite path.
    
  Similarly, to any tree $T_i$, we add a node $h_i$ together
    with edges from all nodes (including $h_i$ itself) to $h_i$. Next,
    we replace every edge by a path of length $2$. The set
    $\mathit{Own}$ consists of these new nodes. The node $h_i$ and the
    unique successor node $h_i'$ (that originates from the replacement
    of the edge $(h_i,h_i)$ by a path of length 2) are the only
    winning nodes from $\mathit{Fin}$. Starting in the root of $T_i$,
    player 0 has a winning strategy in this B\"uchi game $H_i$ iff
    $T_i$ does not contain any infinite path.
    Note that once a play enters a winning node it will continue with
    winning nodes \textit{ad infinitum}.
    
 For any $i$ with $1\le i\le n$, we construct the direct
    product of the two games $G_i$ and $H_i$ described above: Nodes
    are of the form $(g,h)$ where $g$ is a node from $G_i$ and $h$ a
    node from $H_i$ with
    $g\in\mathit{Own}_{G_i}\iff h\in\mathit{Own}_{G_i}$. The set
    $\mathit{Own}$ equals
    $\mathit{Own}_{G_i}\times\mathit{Own}_{H_i}$. There is an edge
    from $(g,h)$ to $(g',h')$ iff there are edges from $g$ to $g'$ in
    $G_i$ and from $h$ to $h'$ in $H_i$. Finally, a node $(g,h)$
    belongs to $\mathit{Fin}$ iff both $g$ and $h$ are winning.
    Clearly, any play in this game $GH_i$ ``consists'' of two plays in
    $G_i$ and in $H_i$, resp. Since a play in $S_i$ cannot leave the
    set of winning nodes, a play in this game is won by player 0 if
    both component plays are won by player 0 in $G_i$ and in $H_i$,
    resp. Hence player 0 has a winning strategy iff $S_i$ contains an
    infinite branch while $T_i$ does not.
    
 Finally, we consider the disjoint union of all the games
    $GH_i$ and add a node $v_{\mathsf{I}}\in\mathit{Own}$. In
    addition, we add edges from $v_{\mathsf{I}}$ to the starting nodes
    of all the games $GH_i$. Now it is rather obvious that player 0
    wins this game iff it wins one of the games $GH_i$ and therefore
    iff, for some $1\le i\le n$, the tree $S_i$ contains an infinite
    branch and $T_i$ does not.

  From Turing machines that describe the trees $S_i$ and $T_i$, we can
  construct Turing machines that describe this game. Hence, we reduced
  a $B\Sigma^1_1$-complete problem to the RBG problem.
\end{proof}

The rest of this section is devoted to showing that
$\DKsimu{(\omega,0)}{(\{a,b\},\{c,d\})}$ is computationally at least
as difficult as solving general RBGs. We present a reduction from the
RBG problem to $\DKsimu{(\omega,0)}{(\{a,b\},\{c,d\})}$. Let $G$ be a
RBG. Using standard encoding tricks we can assume that its node set is
$\{0,1\}^+$, $\mathit{Own}=0\{0,1\}^*$, the initial node is $1$, and
$\mathit{Fin}=\{0,1\}^*1$. The edge relation of $G$ is decided by a
deterministic Turing Machine~$\mathcal{M}$ with state set $Q$, tape
alphabet $\Gamma$, and transition function
$\delta\colon Q \times \Gamma \to Q \times \Gamma \times \{-1,0,1\}$.
W.l.o.g., we can assume that $\mathcal{M}$ has designated initial /
accepting / rejecting states $\mathsf{init}$ / $\mathsf{acc}$ /
$\mathsf{rej}$ and that the tape alphabet $\Gamma$ equals
$\{0,1,\#,\triangleright,\triangleleft\}$ including a division symbol
$\#$ and two end-of-tape markers $\triangleleft$
and~$\triangleright$. Apart from the usual assumption that
$\mathcal{M}$ uses the end-of-tape markers sensibly, we presume the
following.
\begin{itemize}
\item There are two designated states $\mathsf{acc}$ and
  $\mathsf{rej}$ that the machine uses to signal acceptance and
  rejection.
\item When started in the configuration
  $\triangleright\, w\,\#\,v\, \mathsf{init}\,\triangleleft$ with
  $v,w \in \{0,1\}^+$, the machine eventually halts in
  $\triangleright\, w\, \mathsf{acc}\,\triangleleft$ if $w$ is a
  successor of $v$; otherwise it halts in
  $\triangleright\, w\, \mathsf{rej}\,\triangleleft$. Thus, we assume
  it to reproduce the name of the node that it checked for being a
  successor node to $v$. This helps a subsequent computation to be set
  up. Also note that we assume the machine's tape to be infinite to
  the left and that it starts reading its input from the right. This
  is purely done for presentational purposes since it better matches
  the use of buffers in the constructed multi-buffer games.
\end{itemize} 
In order to ease the presentation we derive a function
$\hat{\delta}: (\Gamma \cup Q)^4 \to (\Gamma \cup Q)^{\le5}$ from the
transition function $\delta$ such that, for any configuration
$\triangleright a_1 a_2 \dots a_k \triangleleft$, the unique successor
configuration equals
\[
  \hat\delta(\triangleright,a_1,a_2,a_3)\,
  \hat\delta(a_1,a_2,a_3,a_4)\,
  \hat\delta(a_2,a_3,a_4,a_5)\,
  \dots
  \hat\delta(a_{k-2},a_{k-1},a_k,\triangleleft)\,.
\]
If $\triangleright,\triangleleft\notin\{b_1,b_2,b_3,b_4\}$, then we
have $|\hat\delta(b_1,b_2,b_3,b_4)|=1$,
$|\hat\delta(\triangleright,b_1,b_2,b_3)|\in\{2,3\}$,
$|\hat\delta(b_1,b_2,b_3,\triangleleft)|\in\{3,4\}$, and
$|\hat\delta(\triangleright,b_1,b_2,\triangleleft)|\in\{4,5\}$.%
 
The construction of two automata $\mathcal{A}$ and $\mathcal{B}$ from
the RBG $G$ hinges on a simple correspondence between winning
strategies in $G$ and those in
$\DKsimugame{(\omega,0)}{\sigma}{\mathcal{A}}{\mathcal{B}}$: \spoiler and
\duplicator simulate the RBG by players $1$ and $0$, respectively.
The $\omega$-buffer is used to name current nodes of the RBG. Its
alphabet contains all symbols used to form configurations:
$\Sigma_1 := \Gamma \cup Q$. The alphabet of letters that can be put
into the capacity-$0$ buffer contains a special new symbol and a copy
of every $\Sigma_1$-symbol:
$\Sigma_2 := \{c\} \cup \{c_x \mid x \in \Sigma_1\}$. 

We need to show how three aspects of the simulation can be realised:
\begin{enumerate}
\item The choice of a successor node by player $1$ in $G$. This is
  easy since player $1$ is simulated by \spoiler. It is easy to
  construct a $\mathcal{A}_{\mathsf{chs}}$ that allows \spoiler to
  choose a $v \in \{0,1\}^+$ and put $\triangleright\, v$ into
  \vskip1mm

  \noindent
  \begin{minipage}{0.67\textwidth} \setlength{\parfillskip}{0pt} the
    buffer. The automaton is shown on the right.  It has two states
    marked with incoming and outgoing edges. These are used to
    indicate in which state a play should begin and where it should
    end
    \end{minipage}\quad  
    \raisebox{0mm}{\begin{minipage}{0.3\textwidth}
    \begin{tikzpicture}[every state/.style={minimum size=3mm,inner sep=0pt}, node distance=13mm, baseline=-1mm]
      \node[state]           (a0) {};
      \node[state]           (a1) [right of=a0] {};
      \node[state]           (a2) [right of=a1] {};
      \path[->] (a0) edge              node [above] {$0,1$}                               (a1) 
                (a1) edge [loop above] node [left]  {$0$}              node [right] {$1$} () 
                     edge              node [above] {$\triangleright$}                    (a2);
      \path[<-,draw] (a0) -- ++(-.5,0);
      \path[->,draw] (a2) -- ++(.5,0);
    \end{tikzpicture}
  \end{minipage}} \\[1mm]
according to the specification in these three cases. Later the
constructed automata will be plugged together by merging such marked
states forming NBA with infinite runs. We use final states at this
point only in special sinks that make one of the players win, i.e.\
make the opponent lose immediately.
\item The choice of a successor node by player $0$ in $G$. This is
  trickier, because we need to make \duplicator name a new node but it
  is only \spoiler who puts letters into the buffers. We will show in
  Lemma~\ref{lem:Dupforce} below how the capacity-$0$ buffer can be
  used in order for \duplicator to force \spoiler to produce a certain
  content for the $\omega$-buffer.
\item The check that a newly chosen node is indeed a successor of the
  current node. We make \spoiler produce a sequence of Turing machine
  configurations in the $\omega$-buffer and \duplicator check that
  they form an accepting computation. This is where the assumption of
  $\mathcal{M}$ being deterministic is needed because it forces
  \spoiler to produce an accepting computation if one exists.
  Lemma~\ref{lem:simucheck} below shows how this can be done.
\end{enumerate}

\begin{lemma}
  \label{lem:Dupforce}
  There are $\mathcal{A}_{\mathsf{frc}}$ and
  $\mathcal{B}_{\mathsf{frc}}$ with the following properties. Suppose
  the game
  $\DKsimugame{(\omega,0)}{(\Sigma_1,\Sigma_2)}{\mathcal{A}_{\mathsf{frc}}}{\mathcal{B}_{\mathsf{frc}}}$
  is played with the initial content of the $\omega$-buffer being
  $\#\,v\, \mathsf{init}\,\triangleleft$.  For every $w \in \{0,1\}^+$,
  \duplicator has a strategy in the game
  $\DKsimugame{(\omega,0)}{(\Sigma_1,\Sigma_2)}{\mathcal{A}_{\mathsf{frc}}}{\mathcal{B}_{\mathsf{frc}}}$
  to reach a configuration in which the buffer content is
  $\triangleright\, w\, \#\, v\, \mathsf{init}\, \triangleleft$.
\end{lemma}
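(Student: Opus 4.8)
The plan is to let $\mathcal{A}_{\mathsf{frc}}$ force \spoiler into a rigid ``write-then-echo'' discipline and to use the capacity-$0$ buffer as a synchronous control channel on which \duplicator checks each of \spoiler's writes. Concretely, $\mathcal{A}_{\mathsf{frc}}$ will have an entry state $m_0$ with a transition $\Transition{m_0}{x}{t_x}$ for each $x\in\{0,1,\triangleright\}$ (pushing $x$ onto the $\omega$-buffer), together with $\Transition{t_x}{c_x}{m_0}$ for $x\in\{0,1\}$, $\Transition{t_{\triangleright}}{c_{\triangleright}}{p}$, and $\Transition{p}{c}{\mathrm{exit}}$. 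Thus immediately after pushing a symbol onto the $\omega$-buffer \spoiler is forced to ``echo'' it on the capacity-$0$ buffer, and after pushing $\triangleright$ and echoing it he is forced to emit the distinguished marker $c$ and move to the exit. The second automaton $\mathcal{B}_{\mathsf{frc}}$ will be tiny but nondeterministic: an entry state $\ell$, a state $\ell'$, an exit state, and an accepting sink $D$ carrying a self-loop on every letter of $\Sigma_1\cup\Sigma_2$; from $\ell$ there are, for each $c_x$ with $x\in\{0,1\}$, both $\Transition{\ell}{c_x}{\ell}$ and $\Transition{\ell}{c_x}{D}$, likewise $\Transition{\ell}{c_{\triangleright}}{\ell'}$ and $\Transition{\ell}{c_{\triangleright}}{D}$, the transition $\Transition{\ell'}{c}{\mathrm{exit}}$, and a transition into $D$ on every remaining $\Sigma_2$-letter out of $\ell$ and $\ell'$ (so that \duplicator is never stuck on a capacity-$0$ read); only $D$ is accepting.

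Fix $w=w_1\cdots w_m\in\{0,1\}^+$. In this gadget \duplicator never reads from the $\omega$-buffer and \spoiler only prepends, so the tail $\#\,v\,\mathsf{init}\,\triangleleft$ stays untouched; hence she wants \spoiler's successive $\omega$-buffer writes to be $w_m,w_{m-1},\dots,w_1$ followed by $\triangleright$. Her strategy $\tau_w$ maintains a counter $j$ of echoes accepted so far: in a round where \spoiler pushes onto the $\omega$-buffer she skips; in a round where \spoiler echoes some $c_x$ (which she must read, the buffer having capacity $0$) she takes $\Transition{\ell}{c_x}{\ell}$ and increments $j$ if $j<m$ and $x=w_{m-j}$, she takes $\Transition{\ell}{c_{\triangleright}}{\ell'}$ if $j=m$ and $x=\triangleright$, she takes $\Transition{\ell'}{c}{\mathrm{exit}}$ on the marker, and in every other case she moves into $D$ and from then on drains the whole contents of both buffers each round while looping in $D$.

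The claim to establish is that under $\tau_w$ every play either reaches a configuration whose $\omega$-buffer content is $\triangleright\,w\,\#\,v\,\mathsf{init}\,\triangleleft$ (with both tokens in their exit states) or is won by \duplicator. If \duplicator ever enters $D$ she wins outright, since her run is then infinite and accepting and the universal self-loop at $D$ lets her absorb everything \spoiler writes, so no letter stays in a buffer for ever. Otherwise she never enters $D$; but because $\mathcal{A}_{\mathsf{frc}}$ ties every $\omega$-buffer write to its echo and $\tau_w$ avoids $D$ only when the echoes arrive in the order $c_{w_m},c_{w_{m-1}},\dots,c_{w_1},c_{\triangleright},c$, \spoiler must have pushed exactly $w_m,\dots,w_1,\triangleright$ onto the $\omega$-buffer, whose content is therefore $\triangleright\,w\,\#\,v\,\mathsf{init}\,\triangleleft$; the two ensuing forced rounds ($\Transition{t_{\triangleright}}{c_{\triangleright}}{p}$ with \duplicator stepping to $\ell'$, then $\Transition{p}{c}{\mathrm{exit}}$ with \duplicator stepping to the exit) leave the $\omega$-buffer as it is and bring both tokens home. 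Since in the subsequent use of the gadget ``\duplicator wins'' is at least as good as ``the stated configuration is reached'', this yields the lemma.

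The step that really needs care is exactly this dichotomy: a \emph{fixed} finite $\mathcal{B}_{\mathsf{frc}}$ cannot by itself store an arbitrary $w$, so it is \duplicator's (possibly unbounded-memory) strategy that carries the counter, and the nondeterministic choice at $\ell$ between ``continue along the expected echo'' and ``bail out to $D$'' is precisely the device that lets her punish a deviating \spoiler. One then has to confirm the two routine ingredients used above: that the bail-out branch is genuinely winning for \duplicator (keeping her run accepting and letting her absorb every buffered letter, immediate from the self-loop at $D$), and that the FIFO discipline of the $\omega$-buffer makes \spoiler's prepended symbols assemble into $\triangleright\,w$ in front of the untouched payload $\#\,v\,\mathsf{init}\,\triangleleft$.
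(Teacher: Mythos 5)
Your gadget inverts the order of commitment relative to the paper's, and this inversion is where the construction goes wrong for its purpose. In the paper's $\mathcal{A}_{\mathsf{frc}}$/$\mathcal{B}_{\mathsf{frc}}$ (Fig.~\ref{fig:Dupforce}), \spoiler first plays a content-free synchronisation letter $c$ into the capacity-$0$ buffer, \emph{then} \duplicator answers by choosing one of three states, each of which has exactly one outgoing edge not leading to the accepting sink (labelled $c_0$, $c_1$ or $c_\triangleright$ respectively), and only \emph{then} does \spoiler announce and write a bit. Thus \duplicator's state choice encodes the letter she demands, the accepting sink is reachable \emph{only} when \spoiler's capacity-$0$ letter contradicts that demand, and a wrong bit never even enters the $\omega$-buffer. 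In your version \spoiler writes the bit first and merely echoes it, and your $\ell$ has, on \emph{every} echo letter $c_x$, both a "continue" edge and an edge into the accepting sink $D$. Since the choice between them is \duplicator's, she can jump to $D$ on the very first echo regardless of what \spoiler does, absorb everything, and win the whole multi-buffer game unconditionally. Your strategy $\tau_w$ chooses not to abuse this, so the literal "reach the target content or win outright" reading of the lemma is satisfied (the paper's proof also only achieves this reading, since \spoiler may always prefer losing to cooperating); but the automata you build cannot serve the role the lemma is stated for: plugged into the construction of Thm.~\ref{thm:RBG2simu}, your $\mathcal{B}_{\mathsf{frc}}$ hands \duplicator a free win the moment the play enters a player-$0$ node, which destroys the "only if" direction of the reduction.

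You correctly identify the tension — a fixed finite $\mathcal{B}_{\mathsf{frc}}$ cannot store $w$, so some nondeterminism must be resolved by \duplicator's unbounded-memory strategy — but you resolve it by letting her decide \emph{whether to punish}, whereas the paper resolves it by letting her decide only \emph{which letter to demand next}, with punishment then hard-wired into the transition labels ($c_i$ back to the start versus $\overline{c_i}$ into the sink). That is the missing idea. Your bookkeeping of the FIFO direction (writes prepend, so \spoiler must emit $w_m,\dots,w_1,\triangleright$ to assemble $\triangleright\,w$ in front of the untouched payload $\#\,v\,\mathsf{init}\,\triangleleft$) and your observation that the $\omega$-buffer is never read inside the gadget are both correct and match the paper.
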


\begin{figure}
\begin{center}
\begin{tikzpicture}[every state/.style={minimum size=3mm,inner sep=0pt}, node distance=13mm]
  \node[state]           (a0) {};
  \node[state]           (a1) [right of=a0] {};
  \node[state]           (a2) [right of=a1] {};
  \node[state]           (a3) [right of=a2] {};
  \node[state]           (a4) [below of=a1] {};
  \node[state]           (a5) [right of=a4] {};

 \path[->] (a0) edge                 node [below]                 {$c$}               (a1)
           (a1) edge                 node [below,near end]        {$c_{0}$}          (a2)
                edge [bend right=40] node [below,near end]        {$c_{1}$}          (a3)
                edge                 node [left]                  {$c_{\triangleright}$} (a4)
           (a2) edge [bend right]    node [above,very near start] {$0$}             (a0)
           (a3) edge [bend right]    node [above,very near start] {$1$}             (a0)
           (a4) edge                 node [above]                 {$\triangleright$} (a5);

  \path[<-,draw] (a0) -- ++(-.8,0);
  \path[->,draw] (a5) -- ++(.8,0);

  \node[state]           (b0) [right=3cm of a3] {};
  \node[state]           (b1) [right of=b0]     {};
  \node[state]           (b2) [right of=b1]     {};
  \node[state,accepting] (b3) [right of=b2]     {};
  \node[state]           (b4) [below of=b2]     {};
  \node[state]           (b5) [right of=b4]     {};
  \node[state]           (b6) [below of=b0]     {};

  \path[->] (b0) edge [bend right]    node [below,very near end]   {$c$}                         (b1)
                 edge [bend right=40] node [below,near start]      {$c$}                         (b2)
                 edge [bend right=30] node [left,very near start]  {$c$}                         (b4)
            (b1) edge [bend right]    node [below]                 {$c_{0}$}                    (b0)
                 edge [bend left]     node [above,near end]        {$\overline{c_{0}}$}         (b3)
            (b2) edge [bend right]    node [above]                 {$c_{1}$}                    (b0)
                 edge [bend left]     node [below,near start]      {$\overline{c_{1}}$}         (b3)
            (b3) edge [loop right]    node [right]                 {$\Sigma_1,\Sigma_2$}          ()
            (b4) edge                 node [above,very near start] {$\overline{c_{\triangleright}}$} (b3)
                 edge                 node [above,near end]        {$c_{\triangleright}$}            (b5)
            (b6) edge [bend right]    node [above,very near start] {$c$}                         (b1)
                 edge [bend right=20] node [below right=-1mm]           {$c$}                         (b2);

  \path[<-,draw] (b6) -- ++(-.8,0);
  \path[->,draw] (b5) -- ++(.8,0);

\end{tikzpicture}
\end{center}
\caption{Letting \duplicator force \spoiler to put something from $\triangleright\, \{0,1\}^+$ into the $\omega$-buffer.} 
\label{fig:Dupforce}
\end{figure}

\begin{proof}
  $\mathcal{A}_{\mathsf{frc}}$ is shown on the left of
  Fig.~\ref{fig:Dupforce}.  $\mathcal{B}_{\mathsf{frc}}$ is shown on
  the right using the abbreviations
  $\overline{c_a} = \Sigma_2 \setminus \{c_a\}$.

  Suppose the two players play on these automata starting in the
  states marked with incoming edges. 
  \spoiler must open the game by playing $c$, and \duplicator can
  respond to this synchronisation move going to a state that has
  exactly one outgoing edge that does not lead to the accepting state,
  labeled with either $c_0$ or $c_1$. \spoiler is now forced to play
  this $c_i$ for otherwise \duplicator will win by moving to the
  accepting state. In repsonse, \duplicator moves to the top left
  state and then \spoiler puts $i$ into the $\omega$-buffer. So,
  effectively, \duplicator has forced him to put $i \in \{0,1\}$ into
  the buffer with her choice in response to the $c$-move and the game
  proceeds with \spoiler in the initial state and \duplicator in the
  top left state. Note that here, the situation is similar, the only
  difference is that \duplicator now has the choice to go to three
  states as opposed to two before. If \duplicator (in repsonse to
  \spoiler's opening $c$-move) goes to his third option, then \spoiler
  is forced to put $\triangleright$ into the buffer and the play has
  reached the states marked with outgoing edges, and the content of
  the $\omega$-buffer is of the form
  $\triangleright\, w\, \#\, v\, \mathsf{init}\, \triangleleft$ with a
  $w \in \{0,1\}^+$ chosen by \duplicator, if it was
  $\#\, v\, \mathsf{init}\, \triangleleft$ at the beginning.
\end{proof}

\begin{lemma}
  \label{lem:simucheck}
  There are $\mathcal{A}_{\mathsf{chk}}$ and
  $\mathcal{B}_{\mathsf{chk}}$ with the following properties. Suppose
  the game
  $\DKsimugame{(\omega,0)}{(\Sigma_1,\Sigma_2)}{\mathcal{A}_{\mathsf{chk}}}{\mathcal{B}_{\mathsf{chk}}}$
  is played on these automata, and the content of the $\omega$-buffer
  is $C = \triangleright\, w\,\#\,v\,
  \mathsf{init}\,\triangleleft$. Then both players have a strategy to
  reach a configuration in which the buffer contains
  $\#\, w\, \mathsf{init}\,\triangleleft$ without losing in the
  meantime, iff $\mathcal{M}$ reaches an accepting configuration when
  started in $C$.
\end{lemma}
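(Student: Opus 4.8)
The plan is to realise the run of $\mathcal M$ on $C$ inside the $\omega$-buffer, one configuration at a time. I would build $\mathcal{A}_{\mathsf{chk}}$ and $\mathcal{B}_{\mathsf{chk}}$ so that in each round \spoiler appends a bounded number of letters to the $\omega$-buffer while \duplicator removes the same number from its front; in the intended play, after \spoiler has finished writing the $i$-th configuration $C_i$ the whole $\omega$-buffer content equals $C_i$, starting from $C_0=C$. The requirement ``$C_{i+1}$ is the $\mathcal M$-successor of $C_i$'' is imposed blockwise through $\hat\delta$: the four-letter window needed to produce a block of $C_{i+1}$ is exactly the quadruple \duplicator has just popped off the front of the $\omega$-buffer and keeps in her finite state. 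Since \spoiler never sees the $\omega$-buffer, the gadget forces him to \emph{echo}, over the copy-alphabet $\{c_x\}\subseteq\Sigma_2$, every letter $a$ that \duplicator reads: \spoiler must play $c_a$ into the capacity-$0$ buffer, \duplicator is forced to read it at once, and the structure of $\mathcal{B}_{\mathsf{chk}}$ sends her to an accepting sink (so that she wins) whenever the echoed letter differs from the one she just read; so a false echo is not a real option for \spoiler. A truthful echo advances a four-letter window inside \emph{both} automata, after which $\mathcal{A}_{\mathsf{chk}}$ has exactly one way to continue, namely to push the corresponding $\hat\delta$-block into the $\omega$-buffer; those pushed letters are likewise announced to \duplicator over the $c$-alphabet so that she can double-check them against $\hat\delta$ on her window and again win on a mismatch. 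Finally, as soon as $\hat\delta$ prescribes the state $\mathsf{acc}$ the two automata enter a short clean-up phase: using the same window machinery, \spoiler is made to write $\#\,w\,\mathsf{init}\,\triangleleft$ instead of $\triangleright\,w\,\mathsf{acc}\,\triangleleft$ (the positions of the leading $\triangleright$ and of the state are recognisable, and the reproduced name $w$ is checked letter by letter), so the $\omega$-buffer afterwards holds precisely the target word.

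For ``$\Leftarrow$'' assume $\mathcal M$ reaches an accepting configuration from $C$. By determinism this run $C=C_0\to\dots\to C_t=\triangleright\,w\,\mathsf{acc}\,\triangleleft$ is unique. Let \spoiler echo truthfully, emit the $\hat\delta$-blocks and, at the end, the clean-up block, and let \duplicator drain the $\omega$-buffer and verify every echo and announcement. All checks pass, neither player ever gets stuck — the $\omega$-buffer always holds the letters \duplicator needs because \spoiler stays a bounded distance ahead, and \spoiler always has the required transition — and after finitely many rounds the $\omega$-buffer contains $\#\,w\,\mathsf{init}\,\triangleleft$.

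For ``$\Rightarrow$'' I argue contrapositively. If $\mathcal M$ does not reach an accepting configuration from $C$ then, by the standing assumptions on $\mathcal M$, its unique run from $C$ halts in $\triangleright\,w\,\mathsf{rej}\,\triangleleft$. Consider any play from the configuration with $\omega$-buffer $C$. Either \spoiler at some point echoes or announces a wrong letter, in which case the structure of $\mathcal{B}_{\mathsf{chk}}$ leaves \duplicator no choice but to move to her accepting sink, so the play is over and the $\omega$-buffer never becomes $\#\,w\,\mathsf{init}\,\triangleleft$; or \spoiler never cheats, in which case the $\omega$-buffer content runs through $\mathcal M$'s unique run, which never exhibits $\mathsf{acc}$, so the clean-up phase is never entered and $\#\,w\,\mathsf{init}\,\triangleleft$ is again never reached. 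In either case the target configuration is unreachable, so it is not the case that both players have a strategy to reach it without losing.

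The step I expect to be the real obstacle is the bookkeeping that keeps \spoiler's write position inside $C_{i+1}$ and \duplicator's read position inside $C_i$ aligned, so that the window held in \duplicator's state is always exactly the one feeding the block \spoiler is currently producing and announcing. This is delicate because $|\hat\delta(\cdot)|$ ranges over $\{1,\dots,5\}$ and because the end markers $\triangleright$ and $\triangleleft$ generate short or long windows, so the offset between the two positions drifts within a bounded interval that both automata must track in their control states; one must also ensure \spoiler can never ``run ahead'' and commit $\omega$-buffer content that \duplicator cannot yet verify, and that the switch into the clean-up phase is triggered in the same round for both players. Once this alignment invariant is in place, the rest — the finitely many control states, the various sinks, and the later plugging with $\mathcal{A}_{\mathsf{frc}},\mathcal{B}_{\mathsf{frc}}$ and the choice gadget — is routine.
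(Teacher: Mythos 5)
Your proposal is correct and follows essentially the same route as the paper: \spoiler threads the successive Turing machine configurations through the $\omega$-buffer, the capacity-$0$ buffer is used to synchronise, letter by letter, what \spoiler writes at the back of the $\omega$-buffer with a four-letter window that \duplicator maintains from its front, a mismatch with $\hat\delta$ sends \duplicator to an all-accepting sink, and the clean-up replaces $\mathsf{acc}$ by $\mathsf{init}$ and $\triangleright$ by $\#$ to set up the next phase. The one design difference is your extra ``echo'' channel by which \spoiler must also announce the letters \duplicator pops, so that $\mathcal{A}_{\mathsf{chk}}$ itself tracks the window and is structurally forced to emit the correct $\hat\delta$-block. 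The paper does without this: its $\mathcal{A}_{\mathsf{chk}}$ is oblivious to the configuration and merely forces \spoiler to announce via $c_x$ each $\Sigma_1$-letter just before writing it; correctness of the written block is enforced solely by \duplicator's threat of moving to the sink. This largely dissolves the alignment bookkeeping you flag as the main obstacle: \duplicator alone keeps the read position (she pops one letter to complete her window, waits for one announcement, checks it against $\hat\delta$, slides the window, repeats), and no offset between two control states has to be tracked, only the bounded-length blocks at the end markers need a few extra states. Your redundant double-check is harmless, but it makes the construction heavier than necessary.
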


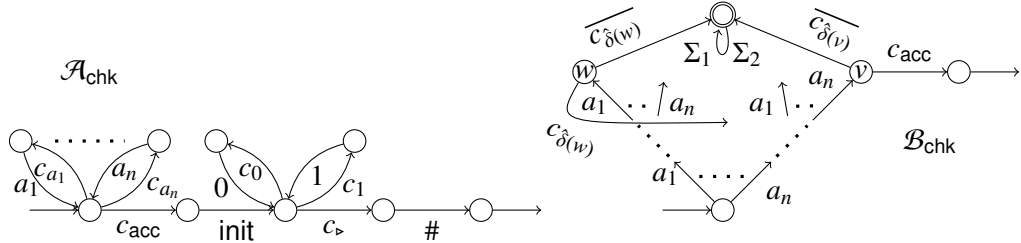
\begin{figure}[t]
\begin{center}
\begin{tikzpicture}[every state/.style={minimum size=3mm,inner sep=0pt}, node distance=13mm]
  \node[state]           (a0)                    {};
  \node[state]           (a1) [above left of=a0]  {};
  \node[state]           (a2) [above right of=a0] {};
  \node[state]           (a3) [right of=a0]       {};
  \node[state]           (a4) [right of=a3]       {};
  \node[state]           (a5) [above left of=a4]  {};
  \node[state]           (a6) [above right of=a4] {};
  \node[state]           (a7) [right of=a4]       {};
  \node[state]           (a8) [right of=a7]       {};

  \path[->] (a0) edge [bend right] node [below left=-2mm]  {$c_{a_1}$}        (a1)
                 edge [bend right] node [right=0mm]        {$c_{a_n}$}        (a2)
            (a1) edge [bend right] node [left=-1mm]        {$a_1$}           (a0)
            (a2) edge [bend right] node [below right=-2mm] {$a_n$}           (a0)
            (a0) edge              node [below]            {$c_{\mathsf{acc}}$} (a3)
            (a3) edge              node [below]            {$\mathsf{init}$}  (a4)
            (a4) edge [bend right] node [below left=-2mm]  {$c_{0}$}        (a5)
                 edge [bend right] node [right=0mm]        {$c_{1}$}        (a6)
                 edge              node [below]            {$c_{\triangleright}$} (a7)
            (a5) edge [bend right] node [left]        {$0$}           (a4)
            (a6) edge [bend right] node [below right=-1.5mm] {$1$}           (a4)
            (a7) edge              node [below]            {$\#$}             (a8);

  \path[loosely dotted,very thick, shorten >=3mm, shorten <=3mm,-,draw] (a1) -- (a2);
  \path[<-,draw] (a0) -- ++(-.8,0);
  \path[->,draw] (a8) -- ++(.8,0);

  \node[state]           (b0)  [right=5.5cm of a4]    {};
  \node                  (bx1) [above left of=b0]   {};
  \node                  (bx2) [above right of=b0]  {};
  \node                  (bx3) [above=6.5mm of bx1] {};
  \node                  (bx4) [above=6.5mm of bx2] {};
  \node[state]           (b1)  [above left of=bx1]  {$w$};
  \node[state]           (b2)  [above right of=bx2] {$v$};
  \node                  (bx5) [above of=b0]        {};
  \node[state,accepting] (b3)  [above of=bx5]       {};
  \node[state]           (b4)  [right of=b2]        {}; 

  \path[->, shorten >=2mm] (b0) edge node (e1) {} node [left]  {$a_1$} (bx1) 
                                edge node (e2) {} node [below right] {$a_n$} (bx2);
  \path[->, shorten <=2mm] (bx1) edge node (e3) {} node [left]  {$a_1$} (b1) 
                                 edge node (e4) {} node [right] {$a_n$} (bx3.south east)
                           (bx2) edge node (e5) {} node [left]  {$a_1$} (bx4.south west) 
                                 edge node (e6) {} node [above left=-1mm,near end] {$a_n$} (b2);
  \path[<-,draw] (b0) -- ++(-.8,0);

  \path[-,loosely dotted,very thick, shorten >=0mm, shorten <=0mm,-,draw] (e1) edge (e2) (e3) edge (e4) (e5) edge (e6);
  \path[loosely dotted,very thick, shorten >=-1mm, shorten <=-1mm,-,draw] (bx1.south east) -- (bx1.north west);
  \path[loosely dotted,very thick, shorten >=-1mm, shorten <=-1mm,-,draw] (bx2.south west) -- (bx2.north east);

  \path[->, draw, shorten >=-2mm] (b1) .. controls ++(-.4,-.7) .. node [below,sloped] {$c_{\hat{\delta}(w)}$} (bx5.south west);

  \path[->] (b1) edge              node [near start,above,sloped] {$\overline{c_{\hat{\delta}(w)}}$} (b3)
            (b2) edge              node [near start,above,sloped] {$\overline{c_{\hat{\delta}(v)}}$} (b3)
            (b3) edge [loop below] node [left] {$\Sigma_1$} node [right] {$\Sigma_2$}           ()
            (b2) edge              node [above]        {$c_{\mathsf{acc}}$}              (b4);
  \path[->] (b4) edge ++(.8,0);

  \node () [above right of=a1] {$\mathcal{A}_{\mathsf{chk}}$};
  \node () [below right of=b2] {$\mathcal{B}_{\mathsf{chk}}$};
\end{tikzpicture}
\end{center}
\caption{Two automata used to simulate computations of the Turing machine $\mathcal{M}$.}
\label{fig:simulation}
\end{figure}

\begin{proof}
  The automaton $\mathcal{A}_{\mathsf{chk}}$ is shown in
  Fig.~\ref{fig:simulation} on the left, assuming that
  $\Sigma_1 \setminus \{\mathsf{acc},\triangleleft,\triangleright,\#\}
  = \{a_1,\ldots,a_{n'}\}$, $n = n'-3$, $a_{n-2} = \triangleleft$,
  $a_{n-1} = \triangleright$, $a_n = \#$. Its structure forces
  \spoiler to do the following: if he wants to put a symbol from
  $\Sigma_1$ into the $\omega$-buffer then he first has to announce
  this by playing the corresponding $\Sigma_2$-copy. This tells
  \duplicator immediately what the next symbol in the $\omega$-buffer
  will be because of this synchronisation via the
  $0$-buffer. Moreover, in order for \spoiler to form a run that
  passed through this automaton infinitely often and for this run to
  be accepting, \spoiler has to eventually announce via
  $c_{\mathsf{acc}}$ that he would put $\mathsf{acc}$ into the
  $\omega$-buffer. However, at this point he actually puts
  $\mathsf{init}$ in instead and afterwards can only put in letters
  from $\Gamma$. This way he immediately sets up the buffer for the
  next simulation; remember that $\mathcal{M}$ is assumed to halt in
  $\triangleright\, w\, \mathsf{acc}\, \triangleleft$ when started in
  $\triangleright\, w\, \#\, v\, \triangleleft$. The same trick of
  announcing a symbol from a final configuration but putting a
  different one into the $\omega$-buffer is used to turn the
  end-marker $\triangleright$ into $\#$ in order to set up the buffer
  for the start of the next simulation.

  $\mathcal{B}_{\mathsf{chk}}$ is more difficult to depict. It is
  sketched on the right of Fig.~\ref{fig:simulation}. Its initial
  state is followed by a tree of depth $4$ that is used to read words
  of the form $a_1a_2a_3a_4$ from the $\omega$-buffer. This is used by
  \duplicator to remember the first $4$ symbols from the
  $\omega$-buffer which is supposed to be the beginning of a
  configuration of $\mathcal{M}$. Now she starts to accept
  synchronisation letters given by \spoiler who has begun to construct
  the next configuration. Remember that he can only play a
  synchronisation action $c_a$ if he puts $a$ into the $\omega$-buffer
  right away. This way \duplicator can control that he does indeed
  construct the valid successor configuration.
 
  Each state at depth $4$ of this tree that can be reached by reading
  the word $w = a_1a_2a_3a_4 \in (\Sigma_1)^4$ has two successors:
  with $c_{\hat{\delta}(w)}$ it can reach the state corresponding to
  reading $a_2a_3a_4$ one level below in the tree. This is used when
  \spoiler correctly chooses the next symbol of the unique successor
  configuration and transmits this through $c_{\hat{\delta}(w)}$. This
  is shown on the leftmost state of the tree structure. The rightmost
  state corresponding to, say $v$, shows an exception: if
  $\hat{\delta}(v) = \mathsf{acc}$ then \spoiler is about to produce
  the last configuration of $\mathcal{M}$'s computation, then
  \duplicator can move to the right and finish the simulation. The
  $\omega$-buffer is then set up for the next simulation already since
  \spoiler puts $\mathsf{init}$ instead of $\mathsf{acc}$ into the
  $\omega$-buffer.

  The other successor of the states corresponding to reading
  $w \in (\Sigma_1)^4$ from the $\omega$-buffer is reached with
  $\Sigma_2 \setminus \{ c_{\hat{\delta}(w)} \}$, here abbreviated as
  $\overline{c_{\hat{\delta}(w)}}$. This corresponds to \spoiler
  producing a symbol that is not the next one in the unique successor
  configuration, and this takes \duplicator to a state that makes
  \spoiler lose. Formally, $\mathcal{B}_{\mathsf{chk}}$ has states
  $(\Sigma_1)^{\le 4} \cup \{\mathsf{sink},\mathsf{done}\}$ and the following
  transitions  with initial state $\epsilon$. \vspace{-0.3cm}%
  \begin{align*}
    \Transition{w}{a}{wa} &, \text{ for } w \in (\Sigma_1)^{\le 3}, a \in \Sigma_1  &
    \Transition{w}{c_{\mathsf{acc}}}{\mathsf{done}} &, \text{ for } w \in (\Sigma_1)^4 
    \hspace{1cm} 
    \Transition{aw}{c_{\hat{\delta}(aw)}}{w} &, \text{ for } w \in (\Sigma_1)^3 \\
    \Transition{w}{c}{\mathsf{sink}} &, \text{ for } w \in (\Sigma_1)^4, c \not\in \{ c_{\hat{\delta}(w)}, c_{\mathsf{acc}} \}  & 
    \Transition{\mathsf{sink}}{x}{\mathsf{sink}} &, \text{ for every } x \in \Sigma_1 \cup \Sigma_2 & & \qedhere
  \end{align*}  %
\end{proof}

\begin{theorem}
  \label{thm:RBG2simu}
  Given a RBG $G$, one can construct two NBA $\mathcal{A}$ and
  $\mathcal{B}$ such that
  $\mathcal{A} \DKsimu{(\omega,0)}{(\Sigma_1,\Sigma_2)} \mathcal{B}$
  iff player $0$ has a winning strategy for $G$.
\end{theorem}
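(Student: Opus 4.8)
The plan is to assemble $\mathcal{A}$ and $\mathcal{B}$ out of the gadgets of Lemmas~\ref{lem:Dupforce} and~\ref{lem:simucheck} together with $\mathcal{A}_{\mathsf{chs}}$ and a little glue, so that plays of $\DKsimugame{(\omega,0)}{(\Sigma_1,\Sigma_2)}{\mathcal{A}}{\mathcal{B}}$ in which \spoiler behaves ``honestly'' are in bijective correspondence with plays of $G$. Concretely, $\mathcal{A}$ and $\mathcal{B}$ each consist of an initialisation part that seeds the $\omega$-buffer with $\#\,1\,\mathsf{init}\,\triangleleft$ (encoding the initial node $1$) followed by a cyclic \emph{macro-round} part, which is entered and left with the $\omega$-buffer holding exactly $\#\,v\,\mathsf{init}\,\triangleleft$ for the current RBG-node $v$. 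A macro-round runs three stages in sequence: (i) a dispatch gadget that, using the same capacity-$0$-buffer synchronisation trick as in Fig.~\ref{fig:Dupforce}, forces \spoiler to reveal the first bit of $v$, i.e.\ whether $v\in\mathit{Own}=0\{0,1\}^*$, and sends \duplicator to an accepting sink (so she wins) whenever the revealed bit is inconsistent with the buffer content she reads; (ii) depending on that bit, either a copy of $\mathcal{A}_{\mathsf{chs}}$ (if $v\notin\mathit{Own}$: \spoiler, playing player $1$, freely picks $w\in\{0,1\}^+$ and prepends $\triangleright\, w$ to the $\omega$-buffer) or a copy of $\mathcal{A}_{\mathsf{frc}},\mathcal{B}_{\mathsf{frc}}$ (if $v\in\mathit{Own}$: \duplicator, playing player $0$, forces \spoiler to prepend some $\triangleright\, w$ of her choosing, Lemma~\ref{lem:Dupforce}); (iii) a copy of $\mathcal{A}_{\mathsf{chk}},\mathcal{B}_{\mathsf{chk}}$ that checks $w$ is an $E$-successor of $v$ and, if so, resets the $\omega$-buffer to $\#\,w\,\mathsf{init}\,\triangleleft$ (Lemma~\ref{lem:simucheck}); the play then re-enters the macro-round with current node $w$. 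Gadgets are plugged together by merging their marked in/out states, as announced before Lemma~\ref{lem:Dupforce}. For the acceptance conditions I would put a single $F^{\mathcal{A}}$-state on the main path of each \spoiler-macro-round (so an honest infinite play yields an accepting run of $\mathcal{A}$), and a single $F^{\mathcal{B}}$-state in each \duplicator-macro-round that is traversed exactly when the node $w$ chosen in that round ends in $1$, i.e.\ $w\in\mathit{Fin}$ — possible because \duplicator reads $w$ off the buffer while executing $\mathcal{B}_{\mathsf{chk}}$; all other accepting states of $\mathcal{B}$ are the sinks used to punish \spoiler's illegal moves.

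The correspondence is then as follows. A play in which \spoiler never deviates from the protocol (never writes buffer content inconsistent with his $\Sigma_2$-announcements, never announces a bogus or premature $c_{\mathsf{acc}}$, never stalls in a sub-gadget) determines and is determined by a play $v_0v_1v_2\dots$ of $G$ with $v_0=1$: at $v_i\in\mathit{Own}$ the move of player $0$ is \duplicator's forcing choice in stage (ii), at $v_i\notin\mathit{Own}$ the move of player $1$ is \spoiler's choice in $\mathcal{A}_{\mathsf{chs}}$, and $v_{i+1}$ is the node pushed into the buffer, which survives stage (iii) iff $(v_i,v_{i+1})\in E$ by Lemma~\ref{lem:simucheck}. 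Given a winning strategy for player $0$ in $G$, \duplicator plays the translated protocol while \spoiler is honest and switches to exploiting the first deviation otherwise; one checks the three ``stuck'' situations line up: if the RBG-play is infinite, the game-play is infinite, \spoiler's run is accepting, \duplicator keeps pace so no letter stays in a buffer forever, and \duplicator wins iff her run is accepting iff the visited nodes lie in $\mathit{Fin}$ infinitely often iff player $0$ wins the RBG-play; if player $0$ gets stuck at some $v_i\in\mathit{Own}$, every $w$ \duplicator forces fails stage (iii) and she loses there, matching player $0$ losing; if player $1$ gets stuck at some $v_i\notin\mathit{Own}$, every $w$ \spoiler picks fails stage (iii), so he is trapped in the non-accepting part of $\mathcal{A}_{\mathsf{chk}}$ (he can never legally announce $c_{\mathsf{acc}}$), his run is not accepting, and \duplicator wins, matching player $1$ losing. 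A symmetric translation shows that a winning strategy for player $1$ in $G$ yields one for \spoiler; since $G$ is Borel-determined, exactly one of the two players wins $G$, hence exactly one of \spoiler and \duplicator wins the game, and $\mathcal{A}\DKsimu{(\omega,0)}{(\Sigma_1,\Sigma_2)}\mathcal{B}$ iff player $0$ wins $G$. Finally, $\mathcal{A}$ and $\mathcal{B}$ are computable from Turing machines describing $G$.

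The hard part will be verifying that \duplicator really is never hurt by \spoiler playing dishonestly, so that ``honest'' is without loss of generality for her — this is the routine-but-long technical core. It is exactly what the capacity-$0$ buffer was introduced for: every symbol \spoiler intends to push into the $\omega$-buffer must be pre-announced by its $\Sigma_2$-copy, so any inconsistent write, any premature or unjustified acceptance claim, or any wrong announcement is visible to \duplicator immediately, whereupon she moves to an accepting sink carrying a $\Sigma_1\cup\Sigma_2$-self-loop (draining all buffers and accepting forever) and wins; and any stalling by \spoiler inside a sub-gadget — looping in $\mathcal{A}_{\mathsf{chk}}$ without ever announcing $c_{\mathsf{acc}}$, or never leaving an $a$-loop of $\mathcal{A}_{\mathsf{chs}}$, etc. — keeps his run out of $F^{\mathcal{A}}$ forever, so it is not accepting and \duplicator wins outright. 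Closely tied to this is pinning down the precise FIFO contents of the $\omega$-buffer at every seam between gadgets, since the whole argument rests on the invariant that each macro-round begins and ends with the buffer holding exactly $\#\,v\,\mathsf{init}\,\triangleleft$ for a genuine RBG-node $v$; and getting the Büchi marking right — \spoiler's run accepting on every honest play regardless of which nodes are visited, while \duplicator's acceptance tracks $\mathit{Fin}$ via the single $F^{\mathcal{B}}$-state per macro-round, all without clashing with the sink-based punishment of illegal moves — is the remaining delicate bookkeeping.
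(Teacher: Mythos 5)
Your overall architecture is the same as the paper's: seed the $\omega$-buffer with $\#\,1\,\mathsf{init}\,\triangleleft$, iterate macro-rounds that preserve the invariant $\#\,v\,\mathsf{init}\,\triangleleft$, let \spoiler choose successors via $\mathcal{A}_{\mathsf{chs}}$ and \duplicator via the forcing gadget of Lemma~\ref{lem:Dupforce}, verify successorship via Lemma~\ref{lem:simucheck}, and encode $\mathit{Fin}$ in \duplicator's B\"uchi condition. But there is one genuine gap, and it is fatal to the reduction as you describe it: stage~(iii) cannot be \emph{the same} copy of $\mathcal{A}_{\mathsf{chk}},\mathcal{B}_{\mathsf{chk}}$ after both kinds of choice. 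In $\mathcal{A}_{\mathsf{chk}}$, a rejecting computation of $\mathcal{M}$ leaves \spoiler with no way to legally announce $c_{\mathsf{acc}}$, so he is trapped in a non-accepting part of $\mathcal{A}$ and \emph{\duplicator wins}. That is the desired outcome when \spoiler picked an invalid $w$, but it is exactly the wrong outcome when \duplicator forced an invalid $w$: with your construction, the moment the play reaches any node of $\mathit{Own}$, \duplicator can force an arbitrary non-successor and win outright, independently of $G$. You assert ``every $w$ \duplicator forces fails stage~(iii) and she loses there,'' but supply no mechanism by which she loses; the gadget you name produces the opposite verdict. The paper resolves this with a second variant $\mathcal{A}'_{\mathsf{chk}}$, obtained by attaching a $c_{\mathsf{rej}}$-transition to an accepting $\mathsf{rej}$-sink at the initial state of $\mathcal{A}_{\mathsf{chk}}$, and uses $\mathcal{A}'_{\mathsf{chk}}$ only after \duplicator's forcing phase: when the deterministic machine rejects, \spoiler escapes to an accepting sink that \duplicator cannot match, so \emph{she} loses. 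Some such asymmetric treatment of the check stage is indispensable for the ``only if'' direction of the theorem.

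Two smaller points. First, your separate ``dispatch gadget'' in stage~(i) is both unnecessary and awkward: the owner of the current node is determined by the first bit of $v$, which sits \emph{inside} the FIFO buffer behind $\#$, so \duplicator cannot inspect it without consuming it and destroying the invariant; the paper instead has the check gadgets remember, while the configuration is being read through anyway, whether the produced node starts with $0$ or $1$, and exits through one of two marked states accordingly. Second, you address \spoiler stalling but not \duplicator stalling inside $\mathcal{B}_{\mathsf{frc}}$ (forcing bits forever without ever emitting $\triangleright$); the paper handles this by making the first state of $\mathcal{A}_{\mathsf{frc}}$ accepting, so that an unterminated forcing phase yields an accepting run for \spoiler and a non-accepting one for \duplicator. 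Your B\"uchi marking is otherwise essentially the paper's.
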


\begin{proof}
  We construct $\mathcal{A}$ and $\mathcal{B}$ by plugging (variants
  of) $\mathcal{A}_{\mathsf{chs}}$, $\mathcal{A}_{\mathsf{frc}}$,
  $\mathcal{A}_{\mathsf{chk}}$, $\mathcal{B}_{\mathsf{frc}}$ and
  $\mathcal{B}_{\mathsf{chk}}$ together as follows. First of all, we
  modify $\mathcal{A}_{\mathsf{chk}}$ and $\mathcal{B}_{\mathsf{chk}}$
  such that they remember if the buffer content that is produced in
  the end according to Lemma~\ref{lem:simucheck} is
  $\#\,1w\,\mathsf{init}\,\triangleleft$ or
  $\#\,0w\,\mathsf{init}\,\triangleleft$ for some $w \in
  \{0,1\}^*$. Thus, they have two different states each that would be
  marked with outgoing edges. Intuitively, they distinguish the case
  in which player $0$, resp.\ player $1$ is the owner of the current
  RBG node and therefore needs to perform a choice next.
  \vspace*{-3mm}

  Moreover, we use an automaton $\mathcal{A}'_{\mathsf{chk}}$ that is
  obtained from $\mathcal{A}_{\mathsf{chk}}$ by appending
  \begin{tikzpicture}[every state/.style={minimum size=3mm,inner sep=0pt}, node distance=13mm,baseline=-1mm]
    \node                  (a0) {};
    \node[state,accepting] (a1) [right of=a0] {};
    \path[->] (a0) edge node [above] {$c_{\mathsf{rej}}$} (a1)
                    (a1) edge [loop right] node [right] {$\mathsf{rej}$} ();
  \end{tikzpicture}
  to its initial state. Remember that this automaton is used by
  \spoiler to produce a computation of the Turing machine
  $\mathcal{M}$ which is checked for being valid and accepting by
  \duplicator with $\mathcal{B}_{\mathsf{chk}}$. With
  $\mathcal{A}_{\mathsf{chk}}$, \spoiler has a strategy not to lose if
  in the game with initial buffer content
  $\triangleright\, w\, \# \, v\, \mathsf{init}\, \triangleleft$ if
  $w$ is a valid successor of node $v$. Thus, it can be used to verify
  \spoiler's choices of such a successor node that were made with
  $\mathcal{A}_{\mathsf{chs}}$. However, if \duplicator proposed $w$
  instead, and it is not a valid successor of $v$ in $G$, then
  \duplicator should lose. This happens using
  $\mathcal{A}'_{\mathsf{chk}}$: the simulation of $\mathcal{M}$'s
  computation will ultimately reach a rejecting configuration which
  allows \spoiler to move to the accepting sink which \duplicator
  cannot match.

  The following picture shows how $\mathcal{A}$ (left) and
  $\mathcal{B}$ (right) are obtained. A dashed line is drawn in order
  to indicate that outgoing states (on the right in the automata) are
  merged with incoming states (on the left). For those automata that
  have two outgoing states we use the convention that the upper one is
  used when the next chosen node belongs to player $1$ and the lower
  one otherwise.
  \begin{center}
    \begin{tikzpicture}[every state/.style={minimum size=3mm,inner sep=0pt}, node distance=13mm]
      \tikzstyle{automaton}=[shape=ellipse,inner sep=3pt,draw]

      \node[automaton] (achs)                       {$\mathcal{A}_{\mathsf{chs}}$};
      \node[automaton] (achk1) [right=5mm of achs]  {$\mathcal{A}_{\mathsf{chk}}$};
      \node[automaton] (afrc)  [right=5mm of achk1] {$\mathcal{A}_{\mathsf{frc}}$};
      \node[automaton] (achk2) [right=5mm of afrc]  {$\mathcal{A}'_{\mathsf{chk}}$};
      \node[state]     (a0)    [left=16mm of achs]  {};

      \path[thick,dashed,-,draw, shorten >=-2mm, shorten <=-2mm] (achs.east) edge (achk1.west) (afrc.east) edge (achk2.west);
      \path[thick,dashed,-,draw, shorten <=-2mm, rounded corners] ($(achk1.east)+(0,-.2)$) -- ++(.2,0) -- ++(.2,.2) -- ($(afrc.west)+(.2,0)$);
      \path[thick,dashed,-,draw, shorten <=-2mm, rounded corners] ($(achk1.east)+(0,.2)$) -- ++(.2,0) -- ++(.2,.2) -- ++(-.2,.2) -- ++(-3.3,0) -- ++(0,-.3)
      -- ($(achs.west)+(.2,0)$);
      \path[thick,dashed,-,draw, shorten <=-2mm, rounded corners] ($(achk2.east)+(0,.2)$) -- ++(.2,0) -- ++(.2,.2) -- ++(-.2,.2) -- ++(-3.5,0);
      \path[thick,dashed,-,draw, shorten <=-2mm, rounded corners] ($(achk2.east)+(0,-.2)$) -- ++(.2,0) -- ++(.2,-.2) -- ++(-.2,-.2) -- ++(-3.4,0) -- ++(0,.4);

      \path[->, shorten >=-1mm] (a0) edge node [above] {$\#\,1\,\mathsf{init}\,\triangleleft$} (achs.west);
      \path[<-,draw] (a0) -- ++(-.5,0);

      \node[automaton] (bchk) [right=13mm of achk2] {$\mathcal{B}_{\mathsf{chk}}$};
      \node[automaton] (bfrc) [right=6mm of bchk] {$\mathcal{B}_{\mathsf{frc}}$};

      \path[<-,draw,shorten <=-1mm] (bchk) -- ++(-1,0);
      \path[thick,dashed,-,draw, shorten <=-2mm, rounded corners] ($(bchk.east)+(0,-.2)$) -- ++(.2,0) -- ++(.2,.2) -- ($(bfrc.west)+(.2,0)$);
      \path[thick,dashed,-,draw, shorten <=-2mm, rounded corners] ($(bchk.east)+(0,.2)$) -- ++(.2,0) -- ++(.2,.2) -- ++(-.2,.2) -- ++(-1.4,0) -- ++(0,-.3)
      -- ($(bchk.west)+(.2,0)$);
      \path[thick,dashed,-,draw, shorten <=-2mm, rounded corners] (bfrc.east) -- ++(.2,0) -- ++(.2,.2) -- ++(0,.2) -- ++(-.2,.2) -- ++(-1.7,0);

    \end{tikzpicture}
  \end{center}
  The part at the beginning in $\mathcal{A}$ ensures that the
  $\omega$-buffer is filled with $\#\,1\,\triangleleft$, i.e.\ the
  initial game node which is owned by \spoiler. He then uses
  $\mathcal{A}_{\mathsf{chs}}$ to choose a successor, leading to a
  buffer content of the form
  $\triangleright\, w\, \#\, 1\, \mathsf{init}\, \triangleleft$. The
  play then proceeds in $\mathcal{A}_{\mathsf{chk}}$ which requires
  \duplicator to make moves in $\mathcal{B}_{\mathsf{chk}}$, finally
  leading to the buffer content $\#\, w\,\mathsf{init}\,\triangleleft$
  according to Lemma~\ref{lem:simucheck}. Depending on whether $w$
  starts with $1$ or $0$, the corresponding player makes choices to
  fill the buffer to a content of
  $\triangleright\, v\, \#\, w\, \mathsf{init}\, \triangleleft$,
  either \spoiler using $\mathcal{A}_{\mathsf{chs}}$ or \duplicator
  using $\mathcal{B}_{\mathsf{frc}}$ with \spoiler executing her
  wishes in $\mathcal{A}_{\mathsf{frc}}$, according to
  Lemma~\ref{lem:Dupforce}.

  Finally, we need to make sure that \duplicator wins iff the
  underlying play in $G$ visits states of the form $w1$ infinitely
  often.  We make the last states of $\mathcal{A}_{\mathsf{chs}}$ and
  $\mathcal{A}_{\mathsf{frc}}$ final. Then any run of \spoiler in
  which he produces finite simulations of the Turing machine and
  performs choices of finite nodes in the RBG, is accepting. Hence, we
  need to give \duplicator the ability to answer with an accepting run
  if it corresponds to a play in the RBG that was winning for her
  because it visits infinitely many states of the form
  $\{0,1\}^*1$. This can easily be done by letting her go through an
  accepting state in $\mathcal{B}_{\mathsf{chk}}$ only if \spoiler has
  signalled to her that the last configuration of the simulated
  computation is of the form
  $\triangleright\, w\, 1\, \mathsf{acc}\, \triangleleft$ which can
  easily be checked by adding a few more states to this automaton. At
  last, we also need to give \spoiler the ability to win when
  \duplicator does not do her part in the simulation process
  properly. This can occur in $\mathcal{B}_{\mathsf{frc}}$ which
  \duplicator should use to force \spoiler to put a finite
  $\triangleright\, w$ for $w \in \{0,1\}^+$ into the buffer. So we
  need to make sure that she eventually terminates this forcing
  process. This is easily done by making the first state in
  $\mathcal{A}_{\mathsf{frc}}$ accepting for \spoiler.
\end{proof}

\begin{remark}
  Given a RBG $G$, one can construct two NBA $\mathcal{A}$ and
  $\mathcal{B}$ over the alphabet $\{a,b,c\}$ such that
  $\mathcal{A} \DKsimu{(\omega,0)}{(\{a,b\},\{c\})} \mathcal{B}$ iff
  player $0$ has a winning strategy for $G$.
\end{remark}

\begin{proof}
  Consider the automaton $\mathcal{A}$ constructed in the proof
  above. The crucial property is that this automaton can at most
  perform two \emph{consecutive} transitions labeled in
  $\Sigma_2$. Now enumerate all nonempty words over $\Sigma_2$ of
  length at most $2$ as $w_1,w_2,\dots,w_n$. The automaton
  $\mathcal{A'}$ is obtained from $\mathcal{A}$ by deleting all
  transitions labeled in $\Sigma_2$ and replacing any path of labeled
  by a word $w_i$ by a path labeled by $c^i$. Doing the analogous
  changes in $\mathcal{B}$, we obtain $\mathcal{B'}$. Now it should be
  clear that
  $\mathcal{A'}\DKsimu{(\omega,0)}{(\Sigma_1,\{c\})}\mathcal{B'}$
  holds if and only if
  $\mathcal{A}\DKsimu{(\omega,0)}{(\Sigma_1,\Sigma_2)}\mathcal{B}$,
  i.e., if and only if player~0 wins $G$.
\end{proof}

Putting this together with the lower bound for solving recursive
B\"uchi games established above, we obtain the following result that
is in stark contrast to the EXPTIME-decidability of $\DKsimu{\omega}{\Sigma}$.

\begin{corollary}
  The relation $\DKsimu{(\omega,0)}{(\{a,b\},\{c\})}$ is
  $B\Sigma^1_1$-hard.
\end{corollary}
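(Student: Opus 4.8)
The plan is to derive the Corollary by composing Theorem~\ref{thm:RGBhard} with the preceding Remark. Theorem~\ref{thm:RGBhard} states that the recursive B\"uchi game problem is hard for $B\Sigma^1_1$, i.e.\ every problem in $B\Sigma^1_1$ many-one reduces to it. The Remark provides, for each RBG $G$, two NBA $\mathcal{A}$ and $\mathcal{B}$ over $\{a,b,c\}$ with $\mathcal{A}\DKsimu{(\omega,0)}{(\{a,b\},\{c\})}\mathcal{B}$ if and only if player~$0$ wins $G$; read as a map $G\mapsto(\mathcal{A},\mathcal{B})$ this is precisely a many-one reduction from the RBG problem to the relation $\DKsimu{(\omega,0)}{(\{a,b\},\{c\})}$. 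Since many-one reducibility is transitive, every $B\Sigma^1_1$-problem reduces to $\DKsimu{(\omega,0)}{(\{a,b\},\{c\})}$, which is exactly the assertion of the Corollary.

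The one point that needs a sentence of justification is that the reduction supplied by the Remark is genuinely computable. From Turing machines deciding $V$, $\mathit{Own}$, $\mathit{Fin}$ and $E$ of $G$ one first builds, via the standard encoding recalled before Theorem~\ref{thm:RBG2simu}, a deterministic machine $\mathcal{M}$ deciding the edge relation on the normalised node set $\{0,1\}^+$; the constructions of $\mathcal{A}_{\mathsf{chs}}$, $\mathcal{A}_{\mathsf{frc}}$, $\mathcal{A}_{\mathsf{chk}}$, $\mathcal{B}_{\mathsf{frc}}$, $\mathcal{B}_{\mathsf{chk}}$ and their assembly in the proof of Theorem~\ref{thm:RBG2simu} depend effectively on $\mathcal{M}$ (as already observed there), and the final relabelling of each $\Sigma_2$-path $w_i$ by $c^i$ in the proof of the Remark is plainly effective. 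Since the output automata are finite objects, $G\mapsto(\mathcal{A},\mathcal{B})$ is a total computable function, hence a bona fide many-one reduction.

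I do not expect any real obstacle at this level: all the substance already lies in Theorem~\ref{thm:RGBhard} and in the chain Lemma~\ref{lem:Dupforce}, Lemma~\ref{lem:simucheck}, Theorem~\ref{thm:RBG2simu} and the Remark. The Corollary merely records that these fit together, so I would keep its proof to one or two sentences pointing at exactly these ingredients plus the transitivity of $\le_m$, rather than repeating any part of the reduction.
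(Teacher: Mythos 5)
Your overall architecture---compose Thm.~\ref{thm:RGBhard} with the reduction of Thm.~\ref{thm:RBG2simu} and the subsequent Remark, invoke transitivity of many-one reductions, and note effectiveness of the construction---is sound and matches the paper's intent. However, you skip one concrete step, and it is precisely the step that the paper's own (one-sentence) proof of the Corollary consists of: the coding of the alphabet $\Sigma_1=\Gamma\cup Q$ by words over the binary alphabet $\{a,b\}$. Although the Remark is \emph{stated} for automata over $\{a,b,c\}$, its proof only performs the collapse of $\Sigma_2$ to the single letter $c$ and concludes $\mathcal{A'}\DKsimu{(\omega,0)}{(\Sigma_1,\{c\})}\mathcal{B'}$ iff player~0 wins $G$; the first component of the trace alphabet there is still the large alphabet $\Sigma_1$. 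Taking the Remark's statement at face value therefore makes your argument rest on something that has not actually been established, and nothing in your write-up supplies the missing encoding.

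To close the gap, fix an injective block code $h\colon\Sigma_1\to\{a,b\}^\ell$ with $\ell=\lceil\log_2|\Sigma_1|\rceil$, replace every $\Sigma_1$-labelled transition of $\mathcal{A'}$ and $\mathcal{B'}$ by a path labelled by the corresponding code word, and observe that (i) buffer~1 is unbounded, so expanding single letters into blocks of length $\ell$ does not interact with any capacity constraint, (ii) the projection onto $\Sigma_2=\{c\}$ is untouched, so the synchronisation through the capacity-$0$ buffer is preserved, and (iii) injectivity of $h$ yields a bijection between plays before and after the coding that preserves the winner. This is routine, but it is the actual content of the Corollary's proof; the remainder of your argument (computability of the composed reduction, transitivity of $\le_m$) is correct and consistent with the paper.
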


\begin{proof}
  All that is left is a coding of the alphabet $\Sigma_1$ by words
  over the binary alphabet $\{a,b\}$.
\end{proof}

Clearly, the same lower bound holds for any multi-buffer simulation
involving at least two buffers of which at least one is unbounded such
that $|\Sigma_1\setminus\Sigma_2|\ge2$ and
$\Sigma_2\setminus\Sigma_1\neq\emptyset$.

However, if $|\Sigma_1\setminus\Sigma_2| = 1$, then the multi-buffer
game is decidable. 

\begin{theorem}
\label{thm:omega_zero_decidable}
  The relation $\DKsimu{(\omega,0)}{(\{(a)\},\{c,d\})}$ is decidable.
\end{theorem}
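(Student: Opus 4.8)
The plan is to recast the game $\DKsimugame{(\omega,0)}{(\{a\},\{c,d\})}{\mathcal{A}}{\mathcal{B}}$ as a \emph{one-counter game}, i.e.\ a pushdown game whose stack alphabet consists of a single symbol $a$ above a bottom marker $\bot$. The point is that the unbounded buffer carries the one-letter alphabet $\Sigma_1=\{a\}$, so its content is always $a^n$ and is faithfully represented by the counter value $n$; and the capacity-$0$ buffer is empty between rounds, its only effect being to force \duplicator to read the $c$ or $d$ just played by \spoiler within the same round. A game configuration is thus given by the states of $\mathcal{A}$ and $\mathcal{B}$, the counter $n$, whose turn it is, and a pending $c$/$d$-obligation on \duplicator's turn. \spoiler's $a$-move pushes an $a$; his $c$- or $d$-move records the obligation and leaves the counter alone; \duplicator's reply is a finite run of $\mathcal{B}$ reading a word of the form $a^m$ (if there is no obligation) or $a^m x a^{m'}$ with $x$ the obligated letter, which we split into single-letter steps, each $a$-step popping one $a$ (never popping $\bot$), the $x$-step clearing the obligation and ending her turn. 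Finite plays follow the usual convention that a player unable to move loses (encoded by a losing sink): \spoiler loses in a sink of $\mathcal{A}$, \duplicator loses if she still owes a $c$/$d$ but $\mathcal{B}$ offers no way to read it. Into the control states we also fold three Boolean flags recording ``$\mathcal{A}$ just visited $F^{\mathcal{A}}$'', ``$\mathcal{B}$ just visited $F^{\mathcal{B}}$'', and ``\duplicator just read at least one $a$''.

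The one genuine lemma is to turn \duplicator's winning condition into an $\omega$-regular one. Its only non-$\omega$-regular ingredient is the clause ``every $a$ written by \spoiler is eventually read''. I claim that, on an infinite play, this holds if and only if \duplicator reads an $a$ in infinitely many rounds \emph{or} the counter equals $0$ in infinitely many rounds. If the counter is $0$ infinitely often, then at each such moment every previously written $a$ has already been read (FIFO on a unary buffer), so every $a$ is eventually read; if \duplicator reads an $a$ infinitely often, her cumulative number of read $a$'s tends to infinity and again every $a$ is eventually read. Conversely, if she reads only finitely many $a$'s in total \emph{and} the counter is $0$ only finitely often, then from some round on \spoiler has strictly more $a$'s written than \duplicator has read and never catches up, so the first still-unread $a$ is never read. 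Using this, \duplicator's condition on infinite plays becomes
\[
  \neg(\mathcal{A}\ \text{accepting})\ \lor\ \bigl(\mathcal{B}\ \text{accepting}\ \land\ (\text{an }a\text{ is read infinitely often}\ \lor\ \text{the counter is }0\text{ infinitely often})\bigr).
\]
Each atom of this formula is a B\"uchi condition on the control states of the one-counter game: the first three are read off the flags above, and ``the counter is $0$'' is simply ``the top stack symbol is $\bot$'', which the control sees. Hence the whole condition is $\omega$-regular.

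It then remains to invoke the decidability of pushdown games (hence of one-counter games) with $\omega$-regular winning conditions: taking the synchronous product of the arena with a deterministic parity automaton for the above condition keeps it a one-counter game and turns the objective into a parity condition, and the winner of a one-counter (indeed any pushdown) parity game is computable by Walukiewicz's theorem on pushdown games. Since \duplicator has a winning strategy in $\DKsimugame{(\omega,0)}{(\{a\},\{c,d\})}{\mathcal{A}}{\mathcal{B}}$ exactly when the corresponding player wins this parity game, the relation $\DKsimu{(\omega,0)}{(\{a\},\{c,d\})}$ is decidable. This also pinpoints why the hypothesis $|\Sigma_1\setminus\Sigma_2|=1$ matters: with two free letters the unbounded buffer is a genuine FIFO queue and one regains the undecidability of the previous subsection, whereas over a one-letter alphabet it degenerates to a counter.

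I expect the main obstacle to be the faithfulness of this encoding together with the lemma above. One must check that decomposing \duplicator's compound per-round move into single counter-steps gives her neither more nor less power --- in particular that she can never be forced to take, nor profit from, an infinite sub-move, which is true because only finitely many $a$'s are ever in the buffer and at most one $c$/$d$-obligation is outstanding --- and one must be careful with the case analysis in the equivalence for ``every $a$ eventually read'', which relies precisely on the unary buffer behaving as a plain, non-reorderable counter. Once these are settled, the reduction to a parity condition and the appeal to the theory of pushdown games are routine.
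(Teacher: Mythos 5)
Your proof is correct and follows essentially the same route as the paper: both encode the simulation game as a turn-based game on the configuration graph of a one-counter automaton whose control state records the two automaton states, the pending $c$/$d$-obligation and whose turn it is, and both observe that the only non-regular ingredient of the winning condition --- every $a$ eventually read --- becomes an $\omega$-regular condition on the counter behaviour. The only real difference is the final appeal: the paper expresses the existence of a winning strategy as an MSO formula over the configuration graph and cites the decidability of MSO theories of one-counter automata, whereas you take a product with a deterministic parity automaton and invoke Walukiewicz's theorem on pushdown parity games; your formulation of the counter condition (``an $a$ is read infinitely often or the counter is $0$ infinitely often''), together with its two-directional justification, is if anything argued more carefully than the paper's corresponding clause.
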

\begin{proof}
  Let
  $\mathcal{A} =
  (Q^{\mathcal{A}},\Sigma,p_I,\delta^{\mathcal{A}},F^{\mathcal{B}})$
  and
  $\mathcal{B} =
  (Q^{\mathcal{B}},\Sigma,q_I,\delta^{\mathcal{B}},F^{\mathcal{B}})$
  be two NBA over the alphabet $\{a,c,d\}$. 
  Consider the following 1-counter automaton: The set of states equals
    $Q^{\mathcal{A}}\times
    Q^{\mathcal{B}}\times\{c,d,\varepsilon\}\times\{0,1\}$, and
 all transitions are $\epsilon$-transitions.
  For all transitions $\Transition{p}{a}{p'}$ in $\mathcal{A}$,
    the 1-counter automaton has an $\epsilon$-transition from
    $(p,q,\epsilon,0)$ to $(p',q,\epsilon,1)$ that increments the
    counter.
  For all transitions $\Transition{p}{c}{p'}$ in $\mathcal{A}$,
    the 1-counter automaton has an $\epsilon$-transition from
    $(p,q,\epsilon,0)$ to $(p',q,c,1)$ that leaves the counter
    unchanged (and similarly for transitions $\Transition{p}{d}{p'}$).
   From any state $(p,q,\epsilon,1)$, there is an
    $\epsilon$-transition to $(p,q,\epsilon,0)$ that does not change
    the content of the counter.
   For all transitions $\Transition{q}{a}{q'}$ in $\mathcal{B}$,
    the 1-counter automaton has $\epsilon$-transitions from
    $(p,q,\epsilon,1)$ to $(p,q',\varepsilon,0)$ and to
    $(p,q',\varepsilon,1)$ that decrement the counter.
   For all transitions $\Transition{q}{c}{q'}$ in $\mathcal{B}$,
    the 1-counter automaton has $\epsilon$-transitions from
    $(p,q,c,1)$ to $(p,q',\varepsilon,0)$ and to
    $(p,q',\varepsilon,1)$ that leave the counter unchanged (and
    similarly for transitions $\Transition{q}{d}{q'}$).
 
  The set of configurations of this 1-counter automaton equals
  $
    Q^{\mathcal{A}}\times
    Q^{\mathcal{B}}\times\{c,d,\varepsilon\}\times\{0,1\}     
    \times\Nat\,.
  $
  We now define a game whose nodes are the configurations:
  Configurations of the form $(p,q,x,0,n)$ belong to player~0,
    the other configurations belong to player~1.
  The moves are given by the transitions of the 1-counter
    automaton.
  A finite play is lost by the player that got stuck. An
    infinite play $(p_i,q_i,x_i,n_i)_{i\ge0}$ is won by player~1 if
  $p_i\in F^{\mathcal{A}}$ for only finitely many $i\in\Nat$ or
  $q_i\in F^{\mathcal{B}}$ for infinitely many $i\in\Nat$ and
      the counter increases only finitely often or decreases
      infinitely often.

  Then player~1 wins this game $G$ iff \duplicator wins the game
  $\DKsimugame{(\omega,0)}{(\{a\},\{c,d\})}{\mathcal{A}}{\mathcal{B}}$.
  The existence of a winning strategy of player~1 can be expressed as
  a MSO-formula talking about the configuration graph of the 1-counter
  automaton. Since the MSO-theories of 1-counter automata are
  uniformly decidable \cite{MulS85}, the existence of a winning strategy for
  player~II in the game $G$ can be decided.
\end{proof}

\newcommand{\cA}{{\mathcal A}}
\newcommand{\cB}{{\mathcal B}}
\newcommand{\Run}{{\mathit{Run}}}
\newcommand{\ARun}{{\mathit{ARun}}}

\section{Completeness}
\label{sec:sufficiency}
In Section~\ref{sec:multibuf}, we have shown that if \duplicator wins
a multi-buffer game between $\mathcal{A},\mathcal{B}$ over the trace
alphabet $\sigma=(\Sigma_i)_{i\in[k]}$, then
$[L(\mathcal{A})]_\sigma \subseteq [L(\mathcal{B})]_\sigma$.  In this
section, we characterize the relation $\DKsimu{\kappa}{\sigma}$ for
unbounded buffers analogously to the one-buffer case
\cite{DBLP:journals/corr/HutagalungLL14}. More precisely, we will show
that unbounded multi-buffer simulation is equivalent to the existence
of a continuous function that maps accepting runs of $\mathcal A$ to
accepting runs of $\mathcal B$ with trace-equivalent words.\medskip

\noindent\textbf{Notation.} Throughout this section, let $k\in\Nat$ be fixed,
let $\kappa_\omega$ be the capacity function for $k$ buffers with
$\kappa_\omega(i)=\omega$ for all $i\in[k]$, and let
$\sigma=(\Sigma_i)_{i\in[k]}$ be an arbitrary trace
alphabet. Furthermore, we fix two automata
$\cA=(Q^\cA,\Sigma,p_{\mathsf{I}},\delta^\cA,F^\cA)$ and
$\cB = (Q^\cB,\Sigma,q_{\mathsf{I}},\delta^\cB,F^\cB)$. \medskip

First recall that an infinite run of some NBA $\mathcal{A}$ is an
infinite word over $Q^\mathcal{A}\cup \Sigma$.  We denote the set of
runs of $\mathcal A$ by $\mathit{Run}(\mathcal{A})$ and the set of
accepting runs by $\mathit{ARun}(\mathcal{A})$.

Given a set $\Delta$, the set $\Delta^{\omega}$ of infinite words over
$\Delta$ is equipped with the standard structure of a metric
space. The distance $d(x,y)$ between two distinct infinite sequences
$x_0x_1x_2\dots$ and $y_0y_1y_2\dots$ is
$\inf \{ 2^{-i} \mid x_j = y_j$ for all $j < i \}$.
Intuitively, two words are ``significantly close'' if they share a
``significantly long'' prefix.

We call a function
$f\colon\mathit{ARun}(\mathcal{A})\to \mathit{ARun}(\mathcal{B})$
\emph{trace-preserving} if for all accepting runs
$\rho \in \mathit{ARun}(\mathcal{A})$, the run $f(\rho)$ is accepting
and the words of these two runs are trace equivalent. Consequently,
$[L(\mathcal{A})]_\sigma \subseteq [L(\mathcal{B})]_\sigma$ iff there
exists a trace-preserving function
$f\colon \mathit{ARun}(\mathcal{A}) \to
\mathit{ARun}(\mathcal{B})$. Throughout this section, we will show
that the existence of a continuous and trace-preserving function
characterizes
$\mathcal{A} \DKsimu{\kappa_\omega}{\sigma} \mathcal{B}$.

\begin{lemma}
  \label{lem:wintocontinuous}
  If $\mathcal{A} \DKsimu{\kappa_\omega}{\sigma} \mathcal{B}$, then
  there exists a continuous trace-preserving function
  $f\colon \mathit{ARun}(\mathcal{A}) \to \mathit{ARun}(\mathcal{B})$.
\end{lemma}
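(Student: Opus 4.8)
The plan is to extract a continuous trace-preserving function directly from a winning strategy of \duplicator in the game $\DKsimugame{\kappa_\omega}{\sigma}{\mathcal{A}}{\mathcal{B}}$. Fix such a winning strategy $\tau$. Given an accepting run $\rho=(p_0,a_0,p_1,a_1,\dots)\in\ARun(\mathcal{A})$, I would let \spoiler play the run $\rho$ step by step (this is a legal strategy for \spoiler since $\rho$ is an actual run of $\mathcal{A}$, and he never gets stuck). Against this \spoiler-play, the strategy $\tau$ produces a unique infinite play, hence a unique run $\rho_{\mathcal{B}}$ of $\mathcal{B}$; define $f(\rho):=\rho_{\mathcal{B}}$. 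Since $\rho$ is accepting and $\tau$ is winning, $\rho_{\mathcal{B}}$ must be an infinite accepting run, and every letter written into every buffer is eventually read; by the argument already used in the proof of Thm.~\ref{thm:approximate} this gives $\pi_i(w_{\mathcal{A}})=\pi_i(w_{\mathcal{B}})$ for all $i\in[k]$, i.e.\ $w_{\mathcal{A}}\sim_\sigma w_{\mathcal{B}}$. Hence $f$ is well-defined and trace-preserving.

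The remaining and more delicate point is continuity. The key observation is that \duplicator's move in round $n$ depends only on \spoiler's moves in rounds $0,\dots,n$, i.e.\ only on the finite prefix $(p_0,a_0,\dots,p_{n+1})$ of $\rho$. Therefore, if two accepting runs $\rho,\rho'$ agree on their first $N$ transitions, then the two induced plays agree for the first $N$ rounds, so the \duplicator-runs $f(\rho)$ and $f(\rho')$ agree at least on the \mbox{$\mathcal{B}$-moves} output in rounds $0,\dots,N-1$. As $N\to\infty$ the length of this common prefix of $f(\rho)$ and $f(\rho')$ also tends to infinity — this needs the mild observation that \duplicator, playing a winning strategy, cannot skip her move forever without also never reading some buffered letter (she would lose), so the total number of $\mathcal{B}$-symbols output after $N$ rounds is unbounded in $N$. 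Spelling this out: for every $m$ there is an $N(m)$ such that \duplicator has output at least the first $m$ symbols of $f(\rho)$ within $N(m)$ rounds, uniformly in $\rho$ (because the strategy is fixed and the buffers of finite capacity are all of capacity $\omega$ here, but still she cannot stall forever on an accepting \spoiler-run). This yields: $d(\rho,\rho')\le 2^{-N(m)}\Rightarrow d(f(\rho),f(\rho'))\le 2^{-m}$, which is exactly continuity of $f$ on the metric space $\ARun(\mathcal{A})$.

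I expect the main obstacle to be precisely this uniform progress bound that underlies continuity: one must rule out the scenario where \duplicator's winning strategy produces longer and longer stretches of "skip" moves on longer and longer prefixes of \spoiler-runs, which would make the modulus of continuity fail to exist. The way around it is to note that, against any \emph{accepting} \spoiler-run, a winning \duplicator must eventually empty every buffer infinitely often (otherwise some letter stays forever and she loses), so between consecutive "emptying" rounds the buffers are bounded by the amount \spoiler has written, and in particular after $N$ \spoiler-moves only boundedly many symbols sit in buffers, forcing \duplicator to have output at least $(\text{total written}) - (\text{buffer contents})$, a quantity that grows without bound. A clean way to package this is: define $N(m)$ as the supremum over all accepting-run prefixes of length related to $m$ of the round in which the $m$-th $\mathcal{B}$-symbol is output; show this supremum is finite by a compactness/König's-lemma argument on the finitely-branching tree of \spoiler-prefixes together with the fact that on every infinite branch \duplicator's output is infinite. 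Once $N(m)<\infty$ is established, continuity and the rest of the lemma follow immediately.
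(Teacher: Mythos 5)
Your definition of $f$ via a fixed winning strategy, the argument that $f$ is trace-preserving, and the observation that \duplicator's moves in the first $N$ rounds depend only on \spoiler's first $N$ transitions are all exactly the paper's proof. The problem is in how you finish the continuity argument: you insist on a \emph{uniform} progress bound $N(m)$, valid for all accepting runs simultaneously, and you correctly identify this as the delicate point --- but such a bound need not exist, and the compactness argument you sketch for it does not go through. The tree of \spoiler-prefixes along which \duplicator has output fewer than $m$ symbols is indeed finitely branching, and K\"onig's lemma does give an infinite branch if the tree is infinite; however, that infinite branch is merely a run of $\mathcal{A}$, not necessarily an \emph{accepting} one ($\ARun(\mathcal{A})$ is not closed in $(Q^{\mathcal A}\cup\Sigma)^\omega$), so the winning condition gives you no contradiction on it. Concretely, \duplicator's winning strategy may stall until \spoiler first visits an accepting state, and \spoiler may postpone that visit arbitrarily long while still eventually producing an accepting run; then no uniform $N(m)$ exists and $f$ is not uniformly continuous.

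The repair is immediate and is what the paper does: continuity is a pointwise condition, so you may let the bound depend on the run. Fix $\rho_1\in\ARun(\mathcal{A})$ and $n\in\Nat$; since $f(\rho_1)$ is infinite, there is a round $m$ (depending on $\rho_1$ and $n$) after which \duplicator's output against $\rho_1$ has length at least $n$. Any $\rho_2$ agreeing with $\rho_1$ on the first $m$ transitions induces the same first $m$ rounds, hence $f(\rho_1)$ and $f(\rho_2)$ share a prefix of length $n$. That is all continuity requires; the uniform modulus, and with it your K\"onig's-lemma step, should simply be dropped.
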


\begin{proof}
  Suppose \duplicator wins
  $\DKsimugame{\kappa_\omega}{\sigma}{\mathcal{A}}{\mathcal{B}}$ with
  some winning strategy $\theta$. We define $f(\rho) = \rho'$ such
  that $\rho,\rho'$ are the output of
  $\DKsimugame{\kappa_\omega}{\sigma}{\mathcal{A}}{\mathcal{B}}$, in
  which \spoiler plays $\rho$ and \duplicator plays according
  to~$\theta$. The function $f$ is trace-preserving since $\theta$ is
  a winning strategy for \duplicator.

  Let $\rho_1\in\ARun(\cA)$ be an accepting run of $\cA$ and let
  $n\in\Nat$. Since $f(\rho_1)\in\ARun(\cB)$ is the output of
  \duplicator's moves according to the winning strategy $\theta$,
  there is some round $m$ such that \duplicator's output after that
  round has length at least $n$. Now let $\rho_2$ be another run of
  $\cA$ that agrees with $\rho_1$ in the first $m$ transitions, i.e.,
  with $d(\rho_1,\rho_2)\le 2^{-(2m+3)}$. Then, in the first $m$
  rounds, \duplicator does not see any difference between \spoiler's
  runs $\rho_1$ and $\rho_2$. Hence \duplicator (playing according to
  the strategy $\theta$) makes the same moves. This implies that
  $f(\rho_1)$ and $f(\rho_2)$ agree in the first $n$ positions, i.e.,
  $d(f(\rho_1),f(\rho_2))< 2^{-n}$. Thus, we showed that $f$ is
  continuous.
\end{proof}

\begin{example}
  Under the assumption
  $\mathcal{A} \DKsimu{\kappa_\omega}{\sigma} \mathcal{B}$, of the
  lemma, there need not be a continuous function
  $f\colon\Run(\cA)\to\Run(\cB)$ (let alone a continuous function from
  $(Q^\cA\cup\Sigma)^\omega$ to $(Q^\cB\cup\Sigma)^\omega$) that is
  trace-preserving and maps accepting runs to accepting runs.
  \vspace*{-3.4mm}
  
  \noindent
  \parbox{0.69\textwidth}{ \quad\enspace 
  Consider these two NBA $\mathcal{A}$ (above) and $\mathcal{B}$
  (below), and the trace alphabet $\sigma= (\{a,b\})$. We have
  $\mathcal{A} \DKsimu{\kappa_\omega}{\sigma} \mathcal{B}$, since
  \duplicator has the following winning strategy: if, in state $q_0$
  or $q_1$, \duplicator sees $a$ in the buffer, then she moves to
  $q_1$. If, in state $q_0$ or $q_2$, \duplicator sees $b$ in the
  buffer, then she moves to $q_2$. In all other cases, \duplicator
  skips her move. This is a winning strategy since the only accepting
  run $(p_0a)^\omega$ of $\mathcal {A}$ is answered by the accepting
  run $q_0a\,(q_1a)^\omega$ of $\mathcal{B}$.}
   \raisebox{3mm}{\begin{minipage}{0.29\textwidth}
\begin{tikzpicture}[initial text={}, node distance=14mm, every state/.style={minimum size=3mm, inner sep=0pt}]
  \begin{scope}[xshift=-3cm]
  \node[state,initial,accepting]   (0)   {$p_0$};
  \node[state]                     (1) [right of=0] {$p_1$};
  \end{scope}
  \begin{scope}[xshift=3cm]
  \node[state,initial]   (2) [below of=0, node distance=1.8cm] {$q_0$};
  \node[state,accepting] (3) [above right of=2] {$q_1$};
  \node[state]           (4) [below right of=3] {$q_2$};
  \end{scope}

  \path[->] (0) edge[loop above] node [above] {$a$} (0)
                edge node [above] {$b$}              (1)

            (1) edge[loop above] node  {$b$}        (1)

            (2) edge node [above] {$a$}              (3)
                edge node [below] {$b$}              (4)
                edge[loop above] node [above] {$a$}  (2)

            (3) edge[loop right] node {$a$} (3)
            (4) edge[loop right] node {$b$} (4);
\end{tikzpicture} \end{minipage}} \\

  Now let $f\colon\Run(\cA)\to\Run(\cB)$ be a continuous and
  trace-preserving function that maps accepting runs of $\cA$ to
  accepting runs of $\cB$. Since $f$ is trace-preserving, we get
  $f((p_0a)^m p_0b (p_1b)^\omega)=(q_0a)^m q_0b (q_2b)^\omega$ for all
  $m\in\Nat$. Note that, when $m$ grows, the runs
  $(p_0a)^m p_0b (p_1b)^\omega$ converge to the run $(p_0a)^\omega$
  and their $f$-images $(q_0a)^m q_0b (q_2b)^\omega$ converge to
  $(q_0a)^\omega$. Since $f$ is continuous, this implies
  $f((p_0a)^\omega)=(q_0a)^\omega$, i.e., the accepting run
  $(p_0a)^\omega$ of $\cA$ is mapped to the non-accepting run
  $(q_0a)^\omega$ of $\cB$. Hence, indeed, there is no continuous and
  trace preserving function $f\colon\Run(\cA)\to\Run(\cB)$ that maps
  accepting runs to accepting runs.
\end{example}

We are interested in the reverse direction of
Lemma~\ref{lem:wintocontinuous} because it shows that continuity is
the weakest condition that implies multi-buffer simulation on
unbounded buffers.  We will use a \emph{delay game} as in
\cite{DBLP:journals/corr/abs-1209-0800}.  In this game, the winning
condition is given by some function
$f\colon \mathit{ARun}(\mathcal{A}) \to \mathit{ARun}(\mathcal{B})$, and
\duplicator is allowed to form a run without considering any buffers.

Let $\mathcal{A}, \mathcal{B}$ be NBAs and
$f\colon \mathit{ARun}(\mathcal{A}) \to \mathit{ARun}(\mathcal{B})$ be
a function that maps accepting runs of $\mathcal{A}$ to accepting runs
of $\mathcal{B}$.  The \emph{delay game}
$\delaygame{f}{\mathcal{A}}{\mathcal{B}}$ is played between players
\spoiler and \duplicator on $\mathcal{A}$ and $\mathcal{B}$, where in
each round each player tries to extend a run.  A configuration is a
pair of finite runs $(r_{\mathcal{A}},r_{\mathcal{B}})$ on
$\mathcal{A}$ and $\mathcal{B}$, respectively (the pair
$(p_{\mathsf{I}},q_{\mathsf{I}})$ is the initial configuration).   For every
round $i >0 $, with configuration $(r_\mathcal{A},r_\mathcal{B})$:
\spoiler tries to extend $r_{\mathcal{A}}$ with one step:
  $r'_\mathcal{A} := r_\mathcal{A}ap$, and
\duplicator tries to extend $r_\mathcal{B}$ with $n \geq 0$
  steps: $r'_{\mathcal{B}} := r_{\mathcal{B}}b_1q_1 \ldots b_np_n$.
They continue to the next round with the configuration
$(r'_\mathcal{A},r'_\mathcal{B})$.

A play builds two runs: an infinite run $\rho$ of $\cA$ (chosen by
\spoiler) and a finite or infinite run $\rho'$ of $\cB$ (chosen by
\duplicator). We say that \duplicator wins the play iff $\rho$ is not
accepting or $f(\rho)=\rho'$. We write
$\mathcal{A} \sqsubseteq^f_{\mathsf{del}} \mathcal{B}$ as shorthand
for ``\duplicator has a winning strategy in the delay game
$\delaygame{f}{\mathcal{A}}{\mathcal{B}}$''.

Note that if $f\colon \mathit{ARun}(\mathcal{A}) \to \mathit{ARun}(\mathcal{B})$ is
continuous, then $\mathcal{A} \sqsubseteq^f_{\mathsf{del}} \mathcal{B}$.
The winning strategy for \duplicator is to move properly according to $f$:
on configuration $(r_\mathcal{A},r_\mathcal{B})$,
if there
  exists $(b,q) \in \Sigma\times Q^\cB$ such that for any
  $\rho \in \ARun(\mathcal{A})$ with prefix $r_{\mathcal{A}}$,
  $r_{\mathcal{B}}bq$ is a prefix of $f(\rho)$, then \duplicator
  extends $r_{\mathcal{B}}$ to $r'_{\mathcal{B}} := r_{\mathcal{B}}bq$.  
  Otherwise, \duplicator   skips her move.
In this way, \duplicator always forms a run that is the image of $f$.
  
\begin{lemma}
  \label{lem:continuoustowin}
  Let
  $f\colon \mathit{ARun}(\mathcal{A}) \to \mathit{ARun}(\mathcal{B})$
  be a continuous function. Then
  $\mathcal{A} \sqsubseteq^f_{\mathsf{del}} \mathcal{B}$.
\end{lemma}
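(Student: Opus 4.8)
The plan is to verify that the strategy $\theta$ for \duplicator sketched just before the lemma is winning. Recall it: on a configuration $(r_\cA,r_\cB)$, \duplicator extends $r_\cB$ to $r_\cB bq$ if there is a $(b,q)\in\Sigma\times Q^\cB$ such that $r_\cB bq$ is a prefix of $f(\rho'')$ for \emph{every} $\rho''\in\ARun(\cA)$ with prefix $r_\cA$, and she skips otherwise. Fix a play in which \spoiler plays some run $\rho$ of $\cA$ and \duplicator plays according to $\theta$; let $r_\cB^0=q_{\mathsf I},r_\cB^1,r_\cB^2,\dots$ be the sequence of runs \duplicator produces and $\rho'=\bigsqcup_i r_\cB^i$ the finite or infinite run she obtains in the limit. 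If $\rho$ is not an infinite accepting run, \duplicator wins by definition, so assume $\rho\in\ARun(\cA)$; the goal is then to show $\rho'=f(\rho)$, which makes \duplicator win.

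First I would establish the invariant that $r_\cB^i$ is always a prefix of $f(\rho)$ and that $\theta$ never forces \duplicator to make an illegal move, by induction on $i$. The base case is immediate. For the step, the prefix $r_\cA$ of $\rho$ revealed so far is extended by the accepting run $\rho$ itself; hence if $\theta$ picks some $(b,q)$, then instantiating its defining quantifier with $\rho'':=\rho$ shows that $r_\cB^i bq$ is a prefix of $f(\rho)$, so $r_\cB^{i+1}=r_\cB^i bq$ is again a prefix of $f(\rho)$ and, being a prefix of a genuine accepting run of $\cB$, is itself a valid run of $\cB$; if $\theta$ skips, $r_\cB^{i+1}=r_\cB^i$ and there is nothing to check. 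Consequently $\rho'$ is a prefix of $f(\rho)$.

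It then remains to rule out that $\rho'$ is a \emph{proper} prefix of $f(\rho)$. Suppose $r_\cB^i$ is a prefix of $f(\rho)$ of length $\ell<|f(\rho)|$; since runs alternate states and letters, $f(\rho)$ has a letter $b$ and a state $q$ at its next two positions. This is where continuity is used: applying continuity of $f$ at $\rho$ with target precision $2^{-(\ell+2)}$ yields an $m$ such that every $\rho''\in\ARun(\cA)$ agreeing with $\rho$ on a prefix of length $\ge m$ has $f(\rho'')$ agreeing with $f(\rho)$ on its first $\ell+2$ positions; in particular $r_\cB^i bq$ is a prefix of such an $f(\rho'')$. Every accepting run extending the length-$m$ prefix of $\rho$ satisfies this hypothesis (it lies in the $2^{-m}$-ball around $\rho$), so once \spoiler has revealed a prefix of $\rho$ of length $\ge m$ — which happens, since $\rho$ is infinite — the condition in $\theta$ is met with this $(b,q)$ (unless \duplicator has already advanced past length $\ell$ anyway). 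Hence $|r_\cB^j|>\ell$ for some later $j$; iterating, $r_\cB^j$ eventually reaches every length, so $\rho'=f(\rho)$ and \duplicator wins.

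I expect the only genuine subtlety to be this last step: converting the pointwise continuity of $f$ at the (unknown-to-\duplicator) run $\rho$ into a statement quantified over \emph{all} accepting extensions of a sufficiently long finite prefix of $\rho$, which is exactly what lets \duplicator act without knowing $\rho$; it works because any accepting run sharing a length-$m$ prefix with $\rho$ sits in the $2^{-m}$-ball around $\rho$. The translation between "agree on a prefix of length $m$" and "distance $<2^{-m}$" is the usual harmless off-by-one with the metric $d$.
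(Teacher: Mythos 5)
Your proposal is correct and follows essentially the same route as the paper, which only sketches the strategy $\theta$ in the paragraph preceding the lemma and asserts without further detail that ``\duplicator always forms a run that is the image of $f$''; you verify exactly that strategy, splitting the claim into the prefix invariant and the continuity-based progress argument. The one genuine subtlety you flag --- passing from continuity at the unknown run $\rho$ to a condition \duplicator can check on all accepting extensions of a long enough revealed prefix --- is indeed the content the paper leaves implicit, and your handling of it is sound.
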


Suppose $k=1$, i.e., we are in the one-buffer case and let
$f\colon\ARun(\cA)\to\ARun(\cB)$ be continuous and
trace-preserving. Then, as we saw above, \duplicator has a winning
strategy~$\theta$ in the delay game
$\delaygame{f}{\mathcal{A}}{\mathcal{B}}$. In
\cite{DBLP:journals/corr/HutagalungLL14}, it was shown that this strategy is also
a winning strategy in the buffer game
$\DKsimugame{\kappa_\omega}{(\Sigma)}{\cA}{\cB}$. The following
example shows that this is not the case in the multi-buffer game:
$\theta$ may tell \duplicator to output a letter that has not already
been played by \spoiler.

\begin{example}
  \label{ex:notbuffered}
  Consider the following two NBA $\mathcal{A}$ (left) and
  $\mathcal{B}$ (right) and the trace alphabet
  \vspace*{.5mm}

\noindent
\begin{minipage}{0.68\textwidth} \setlength{\parfillskip}{0pt}
  $\sigma=(\{a\},\{b\})$.  Take the continuous function
  $f\colon \mathit{ARun}(\mathcal{A}) \to
  \mathit{ARun}(\mathcal{B})$ that maps all infinite runs of
  $\mathcal{A}$ to the unique infinite run of
  $\mathcal{B}$. \duplicator wins
  $\delaygame{f}{\mathcal{A}}{\mathcal{B}}$ with strategy $\theta$:
\end{minipage}\enspace
\raisebox{2mm}{\begin{minipage}{0.3\textwidth}
\begin{tikzpicture}[initial text={}, node distance=8mm, every state/.style={minimum size=3mm, inner sep=0pt}]
  \node[state,initial]   (qa0)                      {};
  \node[state,accepting] (qa1) [right of=qa0]       {};
  
  \path[->] (qa0) edge [loop above] node [left]      {$b$}        (qa0)
            (qa0) edge              node [above]      {$a$}        (qa1)
            (qa1) edge [loop above] node [right]      {$b$}        (qa1);
            
  \node[state,initial]   (qb0) [right of=qa0, node distance = 2.2cm] {};
  \node[state,accepting] (qb1) [right of=qb0]                      {};

  \path[->] (qb0) edge              node [above]      {$a$}        (qb1)
            (qb1) edge [loop above] node [right]      {$b$}        (qb1);   

\end{tikzpicture}
\end{minipage}} \\[.5mm]
in the first round she forms $q_0aq_1$, in the second round she forms
$q_0aq_1bq_1$, and so on.  However, \duplicator cannot use $\theta$ to
win $\DKsimugame{\kappa_\omega}{\sigma}{\mathcal{A}}{\mathcal{B}}$,
since \spoiler may not output $a$ in the first round. Nevertheless,
\duplicator wins
$\DKsimugame{\kappa_\omega}{\sigma}{\mathcal{A}}{\mathcal{B}}$ by
simply waiting for the first $a$ in buffer 1 and then emptying both
buffers and, from then on, follows the strategy $\theta$.
\end{example}

More generally, we can derive a winning strategy for \duplicator on
$\DKsimugame{\kappa_\omega}{\sigma}{\mathcal{A}}{\mathcal{B}}$, from
some winning strategy $\theta$ on
$\delaygame{f}{\mathcal{A}}{\mathcal{B}}$, as stated in the following
lemma.

\begin{lemma}
  \label{lem:continuoustowin2}
  Let
  $f\colon \mathit{ARun}(\mathcal{A}) \to \mathit{ARun}(\mathcal{B})$
  be a continuous and trace-preserving function. Then
  $\cA\DKsimu{\kappa_\omega}{\sigma}\cB$.
\end{lemma}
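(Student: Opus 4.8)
The plan is to derive \duplicator's multi-buffer strategy from a winning strategy in the delay game, compensating for the discrepancy noted in Example~\ref{ex:notbuffered}. Since $f$ is continuous, Lemma~\ref{lem:continuoustowin} gives a winning strategy $\theta$ for \duplicator in $\delaygame{f}{\cA}{\cB}$ which, against any accepting run $\rho$ of $\cA$, produces exactly the run $f(\rho)$ of $\cB$. I would then define a strategy $\theta'$ for \duplicator in $\DKsimugame{\kappa_\omega}{\sigma}{\cA}{\cB}$ that runs $\theta$ internally against \spoiler's actual choices $a_0,a_1,\dots$: after round $t$ let $V_t$ be the cumulative word that $\theta$ prescribes (a finite prefix of $f(\rho)$, monotone in $t$), and let $U_{t-1}$ be the word \duplicator has actually output so far, which by induction is a prefix of $V_{t-1}$ and hence of $V_t$. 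In round $t$ \duplicator extends $U_{t-1}$ letter by letter along $V_t$ as far as possible subject to the buffer constraint: she appends the next letter $b$ of $V_t$ only if, for every buffer $i\in\sigma(b)$, the resulting $\pi_i$-projection of her cumulative output is still a prefix of $\pi_i(a_0\cdots a_{t-1})$, the letters \spoiler has pushed into buffer $i$ up to and including round $t$. The resulting word $U_t$ labels a path of $\cB$ from \duplicator's current state (being an extension of $U_{t-1}$ along the run $f(\rho)$), and by construction every pop is legal, so $\theta'$ is a legal strategy; the only shapes of moves it produces are a skip (when $U_t=U_{t-1}$) or the nonempty suffix of $U_t$ over $U_{t-1}$, and \duplicator never gets stuck since with unbounded capacities she is never obliged to shorten a buffer.

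Next I would verify that $\theta'$ wins. If \spoiler's run $\rho$ is not accepting, \duplicator wins by the first clause of the winning condition. So assume $\rho\in\ARun(\cA)$; then $f(\rho)$ is accepting and, since $f$ is \emph{trace-preserving}, its word $w_{f(\rho)}$ is $\sigma$-equivalent to $w_\rho$, i.e.\ $\pi_i(w_{f(\rho)})=\pi_i(w_\rho)$ for all $i\in[k]$. The heart of the argument is the claim that \duplicator's output $\bigcup_t U_t$ equals $f(\rho)$. Granting this, $\rho_\cB=f(\rho)$ is an infinite accepting run of $\cB$, and for every letter $a$, picking any $i\in\sigma(a)$, the number of occurrences of $a$ in $w_\cB=w_{f(\rho)}$ equals the number in $\pi_i(w_{f(\rho)})=\pi_i(w_\rho)$, which equals the number of occurrences of $a$ in $w_\cA=w_\rho$; so no letter stays in a buffer forever and \duplicator wins by the second clause.

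The main obstacle is the claim $\bigcup_t U_t=f(\rho)$, i.e.\ that \duplicator never falls permanently behind on some buffer — this is exactly where trace-preservation (not merely continuity in run space, cf.\ the example following Lemma~\ref{lem:wintocontinuous}) is essential. Since each $U_t$ is a prefix of $V_t$ and the $V_t$ exhaust $f(\rho)$, it suffices to rule out that the lengths $|U_t|$ stabilize. Suppose they do, with limit $U$ a proper finite prefix of $f(\rho)$ and $b$ the next letter of $f(\rho)$. For each buffer $i\in\sigma(b)$ we have $b\in\Sigma_i$, so $\pi_i(U)\,b$ is a prefix of $\pi_i(f(\rho))=\pi_i(w_\rho)$; and the cumulative input $\pi_i(a_0\cdots a_{t-1})$ runs through the prefixes of $\pi_i(w_\rho)$ and converges to it, so for all sufficiently large $t$ it is at least as long as, hence extends, $\pi_i(U)\,b$ — that is, buffer $i$ no longer blocks the letter $b$. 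Taking the maximum of these finitely many thresholds over $\sigma(b)$, for $t$ large enough no buffer blocks $b$, so $\theta'$ appends at least $b$ to $U$ in round $t$, contradicting stabilization. Hence $|U_t|\to\infty$ and $\bigcup_t U_t=f(\rho)$, which finishes the argument.
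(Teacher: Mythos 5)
Your proof is correct and follows essentially the same route as the paper: derive $\theta'$ from the delay-game strategy $\theta$ of Lemma~\ref{lem:continuoustowin} by outputting, in each round, the maximal buffer-legal prefix of the run prescribed by $\theta$, and then use trace-preservation to show the output exhausts $f(\rho)$. Your contradiction argument ruling out stabilization of $|U_t|$ is in fact a cleaner rendering of the paper's somewhat terse final step establishing $\rho_\cB=f(\rho_\cA)$.
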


\begin{proof}
  We naturally extend the projection functions
  $\pi_i\colon\Sigma^\infty\to{\Sigma_i}^\infty$ to
  $\pi_i\colon (Q\times\Sigma\times Q)^\infty\to{\Sigma_i}^\infty$ by
  setting $\pi_i(p,a,q)=\pi_i(a)$.

  By Lemma~\ref{lem:continuoustowin}, \duplicator has a winning
  strategy~$\theta$ in the delay game
  $\delaygame{f}{\mathcal{A}}{\mathcal{B}}$. To win the multi-buffer
  game $\DKsimugame{\kappa}{\sigma}{\mathcal{A}}{\mathcal{B}}$,
  \duplicator tries to mimic this strategy~$\theta$. Her problem is
  that in some situation, $\theta$ tells her to play some transition,
  but the letter of this transition is not available in the buffers.

  We next describe the modified strategy $\theta'$ of \duplicator:
  Suppose \spoiler has played the finite run $r_\cA$ in the delay game
  $\delaygame{f}{\mathcal{A}}{\mathcal{B}}$. Let $r$ be \duplicator's
  answer according to her strategy $\theta$. Let $r_\cB$ be the
  maximal prefix of $r$ such that, for all $i\in[k]$, the word
  $\pi_i(r_\cB)$ is a prefix of $\pi_i(r_\cA)$. Then \duplicator's
  strategy $\theta'$ shall ensure that she outputs this run $r_\cB$ in
  response to \spoiler playing $r_\cA$.

  We first verify that \duplicator can play according to this
  strategy: so suppose \spoiler extends his run $r_\cA$ to
  $r_\cA'=r_\cA a p$. Then \duplicator's answer $r'$ in the delay game
  extends the run $r$. The maximal prefix $r_\cB'$ of $r'$ such that
  $\pi_i(r_\cB')$ is a prefix of $\pi_i(r'_\cA)=\pi_i(r_\cA)\pi_i(a)$
  for $i\in[k]$ extends $r_\cB$ by some word
  $x\in(Q^\cB\cup\Sigma)^*$. Then, clearly \duplicator can play the
  difference $x$ between $r_\cB$ and $r_\cB'$.

  It remains to be shown that $\theta'$ is a winning strategy. To this
  aim, let $\rho_\cA$ be an accepting run of $\cA$. Since $\theta$ is
  winning in the delay game $\delaygame{f}{\cA}{\cB}$, the accepting
  run $\rho=f(\rho_\cA)$ is \duplicator's answer in this game to
  \spoiler playing $\rho_\cA$. Let $\rho_\cB$ be \duplicator's answer
  according to the strategy $\theta'$ in the simulation game
  $\DKsimugame{\kappa_\omega}{\sigma}{\cA}{\cB}$. We show
  $\rho=\rho_\cB$: Clearly, by the construction of $\theta'$, any
  finite prefix of $\rho_\cB$ is a prefix of $\rho$. Conversely, let
  $r_\cB$ be a finite prefix of $\rho_\cB$. There is a finite prefix
  $r_\cA$ of $\rho_\cA$ such that, once \spoiler has played $r_\cA$ in
  the simulation game, \duplicator's answer according to $\theta'$ is
  at least $r_\cB$. The rules of the simulation game imply
  $\pi_i(r_\cB)\le\pi_i(r_\cA)$ for all $i\in[k]$. 

  Since $f$ is trace-preserving, we get $\pi_i(\rho_\cA)=\pi_i(\rho)$
  for all $i\in[k]$. Hence we have, for all $i\in[k]$,
  $
     \pi_i(r_\cB) \le \pi_i(r_\cA) \le \pi_i(\rho_\cA) = \pi_i(\rho)\,.
  $
  It follows that there is a finite prefix $r$ or $\rho$ such that
  $
     \pi_i(r_\cB)\le\pi_i(r)
  $
  for all $i\in[k]$. Since both, $r_\cB$ and $r$ are prefixes of
  $\rho_\cB$, this implies that $r_\cB$ is a prefix of $r$ and
  therefore of $\rho$. Consequently,
  $\rho_\cB=\rho=f(\rho_\cA)$. Since $f$ maps the accepting run
  $\rho_\cA$ to an accepting run, $\rho_\cB$ is
  accepting. Furthermore, since $f$ is trace-preserving, the words of
  $\rho_\cA$ and $\rho_\cB$ are equivalent. Hence $\theta'$ is a
  winning strategy for \duplicator.
\end{proof}

Putting Lemmas~\ref{lem:wintocontinuous}, \ref{lem:continuoustowin}
and \ref{lem:continuoustowin2} together we obtain the following
characterisation of a case in which multi-buffer simulation is
complete for trace inclusion, namely that in which there is not only a
trace-preserving function between the runs but one that is
additionally continuous.

\begin{theorem}
  \label{thm:continuous}
  Let $\mathcal{A}$, $\mathcal{B}$ be two NBA over the trace alphabet
  $\sigma=(\Sigma_i)_{i\in[k]}$. We have
  $\mathcal{A} \DKsimu{\kappa_\omega}{\sigma} \mathcal{B}$ if and only
  if there exists a continuous trace-preserving function
  $f\colon \mathit{ARun}(\mathcal{A}) \to \mathit{ARun}(\mathcal{B})$.
\end{theorem}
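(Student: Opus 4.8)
The plan is to prove the two implications separately; each is essentially one of the lemmas already established, so the theorem follows by assembling Lemma~\ref{lem:wintocontinuous} (for the direction ``only if'') with Lemmas~\ref{lem:continuoustowin} and~\ref{lem:continuoustowin2} (for ``if'').

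For the forward direction, I would assume $\mathcal{A}\DKsimu{\kappa_\omega}{\sigma}\mathcal{B}$ via a winning strategy $\theta$, and define $f(\rho)$ to be the run \duplicator produces under $\theta$ when \spoiler plays $\rho$. This $f$ is well defined and lands in $\ARun(\mathcal{B})$ because the winning condition forces \duplicator's run to be infinite and accepting whenever \spoiler's is accepting, and it is trace-preserving because the winning condition equates the letter counts, hence all projections $\pi_i$, of the two words. Continuity then follows from the fact that \duplicator reacts only to finite prefixes of \spoiler's run: given $n$, pick a round $m$ by which \duplicator (playing $\theta$) has already committed $n$ output symbols on $\rho$; any $\rho'$ agreeing with $\rho$ for the first $m$ transitions is indistinguishable to \duplicator during those rounds, so $f(\rho')$ and $f(\rho)$ share their first $n$ symbols.

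For the backward direction, assume a continuous trace-preserving $f$. By Lemma~\ref{lem:continuoustowin}, \duplicator wins the delay game $\delaygame{f}{\mathcal{A}}{\mathcal{B}}$ with a strategy $\theta$ that always outputs the next symbol forced by $f$. I would then transfer $\theta$ to $\DKsimugame{\kappa_\omega}{\sigma}{\mathcal{A}}{\mathcal{B}}$ exactly as in Lemma~\ref{lem:continuoustowin2}: \duplicator computes $\theta$'s intended response but plays only its longest prefix whose $i$-th projection is still a prefix of $\pi_i$ of \spoiler's word so far, for every $i\in[k]$. This is always a legal move, since only a suffix of $\theta$'s output is postponed, and I would argue that in the limit nothing is postponed forever: every finite prefix of $f(\rho_{\mathcal{A}})$ has $i$-projections that are finite prefixes of $\pi_i(\rho_{\mathcal{A}})$, and since $f$ is trace-preserving we have $\pi_i(f(\rho_{\mathcal{A}}))=\pi_i(\rho_{\mathcal{A}})$ for all $i$, so \spoiler eventually supplies enough letters in each buffer. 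Hence \duplicator's full output equals $f(\rho_{\mathcal{A}})$, which is accepting and trace-equivalent, and no buffer retains a letter forever.

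The main obstacle is precisely this limit argument in the backward direction: one must ensure that the postponed output does not accumulate unboundedly in some buffer and that every held-back symbol is eventually released. The decisive ingredient is trace-preservation, i.e.\ the \emph{equality} (not merely prefix containment) of the projections $\pi_i(\rho_{\mathcal{A}})$ and $\pi_i(f(\rho_{\mathcal{A}}))$, which guarantees that whatever \duplicator holds back will later become available in the corresponding buffers.
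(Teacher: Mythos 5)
Your proposal is correct and follows exactly the paper's route: the theorem is obtained by assembling Lemma~\ref{lem:wintocontinuous} for the ``only if'' direction with Lemmas~\ref{lem:continuoustowin} and~\ref{lem:continuoustowin2} for the ``if'' direction, and your sketches of both directions (defining $f$ from \duplicator's winning strategy and checking continuity via the round $m$ at which $n$ output symbols are committed; truncating the delay-game strategy to the longest prefix whose projections are available and using trace-preservation to show nothing is withheld forever) coincide with the paper's own arguments for those lemmas. No gaps to report.
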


\section{Conclusion and Further Work}
\label{sec:concl}

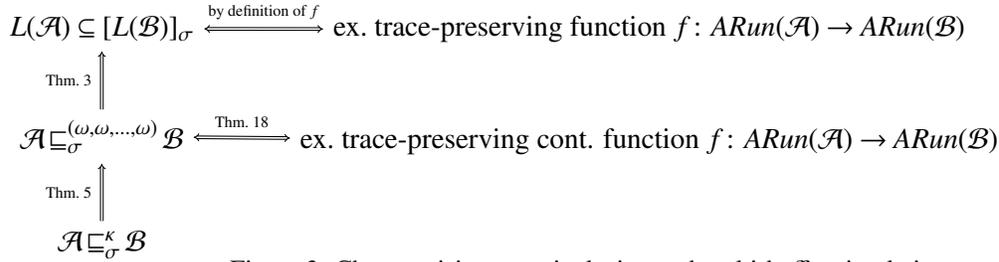
\begin{figure}[t]
	\begin{tikzpicture}[>=implies]
	\matrix (m) [matrix of math nodes, nodes in empty cells, row sep=2em, column sep=3em]
	{ L(\mathcal{A}) \subseteq [L(\mathcal{B})]_\sigma  &\text{ex.\ trace-preserving function } f\colon \ARun(\mathcal{A}) \to \ARun(\mathcal{B})     \\
		\mathcal{A} \DKsimu{(\omega,\omega,\dots,\omega)}{\sigma} \mathcal{B} &\text{ex.\ trace-preserving cont. function } f\colon \ARun(\mathcal{A}) \to \ARun(\mathcal{B})  \\
		\mathcal{A} \DKsimu{\kappa}{\sigma} \mathcal{B} & 
\\	};
	
	\draw[<->,double] (m-1-1) -- node[above] {\tiny by definition of $f$} (m-1-2);
	\draw[<->,double] (m-2-1) -- node[above] {\tiny Thm.~\ref{thm:continuous}} (m-2-2);
	\draw[<-,double] (m-1-1) -- node[left] {\tiny Thm.~\ref{thm:approximate}} (m-2-1);
	\draw[<-,double] (m-2-1) -- node[left] {\tiny Thm.~\ref{thm:hierarchies}} (m-3-1);
	\end{tikzpicture}
\caption{Characterising trace inclusion and multi-buffer simulation.}
\label{fig:summary}
\end{figure}
We have defined multi-buffer simulation relations on B\"uchi automata and analysed them with respect to their usability
for inclusion problems between trace languages defined by NBA. Fig.~\ref{fig:summary} presents a picture of how these 
concepts are related. 
There are (at least) three ways for the work presented here to be continued.

1. As can be seen from Fig.~\ref{fig:summary}, the question of whether there is a characterisation of bounded multi-buffer simulation is still open. For
single-buffer simulations, a matching criterion is known, namely that of a Lipschitz continuous function between the runs of the automata. However, it is
possible to give examples which show that Lipschitz continuity is neither sufficient nor necessary for bounded multi-buffer simulation. We suspect that
an additional condition on the looping structure of the automata in terms of the underlying dependency relation needs to be given such that Lipschitz
continuity captures bounded multi-buffer simulation.

2. In the whole of this article we have assumed that all the used
concepts like automata and games are defined w.r.t.\ a fixed trace
alphabet. It is possible to relax this and study the effect that
varying the independence relation has on the results, e.g.\ whether or
not this also induces strict hierarchies w.r.t.\ expressive
power. This could be used to refine the approximation sketched at the
end of Sect.~\ref{sec:multibuf}. This may not make sense for trace
inclusion problems but may yield better approximations for related
problems like transducer inclusion which feature a very restricted
form of independence on their alphabets.

3. Finally, recall that $\DKsimu{\omega}{\Sigma}$ is
EXPTIME-complete. There is a variant that is ``only'' PSPACE-complete
\cite{DBLP:journals/corr/HutagalungLL14}; it is obtained by requiring
\duplicator to either skip turns or flush the entire buffer. It is not
clear what the complexity of such a restricted multi-buffer game
is. Note that the undecidability proof for
$\DKsimu{(\omega,0)}{(\Sigma_1,\Sigma_2)}$ (Thm.~\ref{thm:RBG2simu})
heavily relies on \duplicator's ability to constantly keep some
content in the buffer, namely the last configuration of a Turing
machine in its simulation.

\vspace*{-5mm}
\bibliographystyle{eptcs}
\bibliography{literature}

\end{document}